\newcommand{\ie}{\emph{i.e. }}
\newcommand{\eg}{\emph{e.g., }}
\newcommand{\blue}{\color[HTML]{00009B}}
\newcommand{\red}{\color[HTML]{9A0000}}
\newtheorem{theorem}{Theorem}
\newtheorem{lemma}{Lemma}
\newtheorem{proposition}{Proposition}
\newtheorem{claim}{Claim}
\theoremstyle{definition}
\newtheorem{remark}{Remark}
\newenvironment{breakablealgorithm}
  {
   \begin{center}
     \refstepcounter{algorithm}
     \hrule height.8pt depth0pt \kern2pt
     \renewcommand{\caption}[2][\relax]{
       {\raggedright\textbf{\ALG@name~\thealgorithm} ##2\par}%
       \ifx\relax##1\relax 
         \addcontentsline{loa}{algorithm}{\protect\numberline{\thealgorithm}##2}%
       \else 
         \addcontentsline{loa}{algorithm}{\protect\numberline{\thealgorithm}##1}%
       \fi
       \kern2pt\hrule\kern2pt
     }
  }{
     \kern2pt\hrule\relax
   \end{center}
  }
\title{Multi-Agent Coordination via Multi-Level Communication}
\author{
  Ziluo Ding$^{1,2,*,\ddagger}$
  \And
  Zeyuan Liu$^{1,*}$
  \And
  Zhirui Fang$^{1,*}$
  \And
  Kefan Su$^{2}$
  \And
  Liwen Zhu$^{3}$
  \And
  Zongqing Lu$^{2,\dagger}$\\
  \TEST
  $^{1}$Tsinghua Shenzhen International Graduate School, Tsinghua University,\\
  $^{2}$Peking University, $^{3}$Tencent AI Lab
}
\begin{document}


\begingroup
\renewcommand\thefootnote{\textasteriskcentered}
\footnotetext{Equal contribution.}
\renewcommand\thefootnote{$\dagger$}
\footnotetext{Correspondence to Zongqing Lu $<$zongqing.lu@pku.edu.cn$>$, Ziluo Ding $<$ziluoding@baai.ac.cn$>$}
\renewcommand\thefootnote{$\ddagger$}
\footnotetext{Work done at Tsinghua Shenzhen International Graduate School, Tsinghua University.} 
\endgroup

\maketitle

\begin{abstract}
 The partial observability and stochasticity in multi-agent settings can be mitigated by accessing more information about others via communication. However, the coordination problem still exists since agents cannot communicate actual actions with each other at the same time due to the circular dependencies. In this paper, we propose a novel multi-level communication scheme, \textit{Sequential Communication} (SeqComm). SeqComm treats agents asynchronously (the upper-level agents make decisions before the lower-level ones) and has two communication phases. In the negotiation phase, agents determine the priority of decision-making by communicating hidden states of observations and comparing the value of intention, obtained by modeling the environment dynamics. In the launching phase, the upper-level agents take the lead in making decisions and then communicate their actions with the lower-level agents. Theoretically, we prove the policies learned by SeqComm are guaranteed to improve monotonically and converge. Empirically, we show that SeqComm outperforms existing methods in various cooperative multi-agent tasks.
\end{abstract}

\section{Introduction}


Centralized training with decentralized execution (CTDE) \citep{lowe2017multi} is a popular learning paradigm in cooperative multi-agent reinforcement learning (MARL). Although the centralized value function can be learned to evaluate the joint policy of agents, the decentralized policies of agents are essentially independent. Therefore, a coordination problem arises. That is, agents may make sub-optimal actions by mistakenly assuming others' actions when there exist multiple optimal joint actions \citep{busoniu2008comprehensive}. Communication allows agents to obtain information about others to avoid miscoordination~\citep{jiang2024settling}. However, most existing work only focuses on communicating messages, \eg the information of agents' current observation or historical trajectory \citep{jiang2018learning,singh2019individualized,das2019tarmac,DBLP:conf/nips/DingHL20}. It is impossible for an agent to acquire other's actions before making decisions since the game model is usually synchronous, \textit{i.e.}, agents make decisions and execute actions simultaneously.


A general approach to solving the coordination problem is to make sure that ties between equally good actions are broken by all agents. One simple mechanism for doing so is to know exactly what others will do and adjust the behavior accordingly under a unique ordering of agents and actions \citep{busoniu2008comprehensive}. Inspired by this, we reconsider the cooperative game from an asynchronous perspective. In other words, each agent is assigned a priority (\ie order) of decision-making at each step, thus the Stackelberg equilibrium (SE) \citep{von2010market} is naturally set up as the learning objective. Specifically, the upper-level agents make decisions before the lower-level agents (Each agent represents a unique level, with upper and lower levels being relative.). Therefore, the lower-level agents can acquire the actual actions of the upper-level agents by communication and make their decisions conditioned on what the upper-level agents would do. Importantly, \textbf{we never break the fundamental dynamic, \(p(s_{t+1}|s_t, \boldsymbol{a}^{1:k-1})\), in the multi-agent system.} The agents make decisions asynchronously but perform actions simultaneously as the default environment setting.

Under this setting, the SE is likely to be Pareto superior to the average Nash equilibrium (NE) in games that require a high cooperation level \citep{zhang2020bi}. However, \textit{is it necessary to decide a specific priority of decision-making for each agent?} Ideally, the optimal joint policy can be decomposed by any orders \citep{wen2019probabilistic}, \eg  \(\pi^*(a_1,a_2|s)=\pi^*(a_1|s)\pi^*(a_2|s,a_1) =\pi^*(a_2|s)\pi^*(a_1|s,a_2)\). But during the learning process, agents are unlikely to use other agents' optimal actions for gradient calculation, making it still vulnerable to the relative overgeneralization problem \citep{wei2018multiagent}. 
This means there is no guarantee that different orders will converge to the same suboptimal. We also claim that the different priorities of decision-making may affect the optimality of the convergence of the learning algorithm in Section \ref{1}. Note that relative overgeneralization occurs when a suboptimal NE in the joint space of actions is preferred over an optimal NE because each agent's action in the suboptimal equilibrium is a better choice when matched with arbitrary actions from the cooperative agents. 

This work proposes a novel multi-level communication scheme for cooperative MARL, \textit{Sequential Communication} (SeqComm), to enable agents to coordinate with each other explicitly. Specifically, SeqComm has two-phase communication, negotiation phase and launching phase. In the negotiation phase, agents communicate their hidden states of observations with others simultaneously. Then, they can generate multiple predicted trajectories, called \textit{intention}, by modeling the environmental dynamics and other agents’ actions. In addition, the priority of decision-making is determined by communicating and comparing the agents' intentions, which are evaluated by their state-value functions. \textbf{The value of each intention represents the predicted rewards obtained by treating that agent as the first mover of the order sequence.} The sequence of others follows the same procedure as aforementioned with the upper-level agents fixed. In the launching phase, the upper-level agents take the lead in decision-making and communicate their actual actions with the lower-level agents. The actual actions will be executed simultaneously in the environment without changes.

SeqComm is currently built on MAPPO \citep{yu2021surprising}. 
Theoretically, we prove the policies learned by SeqComm are guaranteed to improve monotonically and converge. Empirically, we evaluate SeqComm on StarCraft multi-agent challenge v2 (SMACv2) \citep{samvelyan19smac}. We demonstrate that SeqComm outperforms existing communication-free and communication-based methods in various maps in SMACv2. By ablation studies, we confirm that treating agents asynchronously is a more effective way to promote coordination, and SeqComm can provide the proper priority of decision-making for agents to develop better coordination.

\section{Related Work}
\textbf{Communication.}
Existing work \citep{jiang2018learning,kim2018learning,singh2019individualized,das2019tarmac,zhang2019efficient,jiang2020graph,DBLP:conf/nips/DingHL20,konan2022iterated} in this realm mainly focus on how to extract valuable messages. ATOC \citep{jiang2018learning} and IC3Net \citep{singh2019individualized} utilize gate mechanisms to decide when to communicate with other agents. Several studies \citep{das2019tarmac,konan2022iterated} employ multi-round communication to fully reason the intentions of others and establish complex collaboration strategies. Social influence \citep{jaques2019social} uses communication to influence the behaviors of others. I2C \citep{DBLP:conf/nips/DingHL20} only communicates with agents that are relevant and influential which are determined by causal inference. However, all these methods focus on how to exploit valuable information from current or past partial observations effectively and properly. More recently, some studies \citep{kim2021communication,du2021learning,pretorius2021learning} begin to answer the question: can we favor cooperation beyond sharing partial observation? They allow agents to imagine their future states with a world model and communicate those with others. IS \citep{pretorius2021learning}, as the representation of this line of research, enables each agent to share its intention with other agents in the form of the encoded imagined trajectory and use the attention module to figure out the importance of the received intention. However, two concerns arise. On one hand, circular dependencies can lead to inaccurate predicted future trajectories as long as the multi-agent system treats agents synchronously. On the other hand, MARL struggles in extracting useful information from numerous messages, not to mention more complex and dubious messages, \ie predicted future trajectories. Unlike these studies, we treat the agents from an asynchronous perspective, therefore, circular dependencies can be naturally resolved. Moreover, agents send actions to lower-level agents, making the messages compact and informative.       

\textbf{Coordination.}
The agents are essentially independent decision-makers in execution and may break ties between equally good actions randomly. Thus, in the absence of additional mechanisms, different agents may break ties in different ways, and the resulting joint actions may be suboptimal. Coordination graphs \citep{guestrin2002coordinated,bohmer2020deep,wang2021context} simplify the coordination when the global Q-function can be additively decomposed into local Q-functions that only depend on the actions of a subset of agents. Typically, a coordination graph expresses a higher-order value decomposition among agents. This improves the representational capacity to distinguish other agents’ effects on local utility functions, which addresses the miscoordination problems caused by partial observability. 

Another general approach to solving the coordination problem is to make sure that ties are broken by all agents in the same way, requiring that random action choices are somehow coordinated or negotiated. Social conventions \citep{boutilier1996planning} or role assignments \citep{prasad1998learning} encode prior preferences towards certain joint actions and help break ties during action selection. Communication \citep{fischer2004hierarchical,vlassis2007concise} can be used to negotiate action choices, either alone or in combination with the aforementioned techniques. Our method follows this line of research by utilizing the ordering of agents and actions to break the ties, other than the enhanced representational capacity of the local value function.

More discussions of related work are in Appendix \ref{app:relwork}.

\section{Problem Formulation}
\label{1}
\textbf{Cost-Free Communication.} The decentralized partially observable Markov decision process (Dec-POMDP) can be extended to multi-agent POMDP \citep{oliehoek2016concise} by sharing observations among agents via communication. The joint observations are not necessarily equivalent to the state. However, joint observations can be used to represent the state better than single observations.


 Previous work \citep{DBLP:journals/jair/PynadathT02} shows that under cost-free communication, agents would share optimal messages for mutual interest. If the communication cost is high, there is a balance between delivering all the useful messages for greater benefits and keeping the amount of communication as low as possible. In addition, analyzing this extreme case gives us some understanding of the benefit of communication, even if the results do not apply across all domains.  However, even under multi-agent POMDP where agents can get joint observations, coordination problems can still arise \citep{busoniu2008comprehensive}. Suppose the centralized critic has learnt actions pairs \([a_1,a_2]\) and \([b_1,b_2]\) that are equally optimal. Without any prior information, the individual policies \(\pi_1\) and \(\pi_2\) learned from the centralized critic can break the ties randomly and may choose \(a_1\) and \(b_2\), respectively.

\textbf{Multi-Agent Sequential Decision-Making.} We consider fully cooperative multi-agent tasks that are modeled as multi-agent POMDP, where $n$ agents interact with the environment according to the following procedure, which we refer to as \textit{multi-agent sequential decision-making}. 

At each timestep \(t\), assume the priority (\ie order) of decision-making for all agents is given and each priority level has only one agent (\textit{i.e.}, agents make decisions one by one). 
Note that the smaller the level index, the higher priority of decision-making is. 
The agent at each level \(k\) gets its own observation \(o^k_t\) drawn from the state \(s_t\), and receives messages \(\boldsymbol{m}^{-k}_t\) from all other agents, where \(\boldsymbol{m}_t^{-k} \triangleq \{\{o^1_t, a^1_t\},\ldots, \{o^{k-1}_t,a^{k-1}_t\}, o^{k+1}_t,\ldots,o^n_t \}\).
Equivalently, $\boldsymbol{m}_t^{-k}$ can be written as $\{\bm{o_t}^{-k},\bm{a}_t^{1:k-1} \}$, where $\bm{o_t}^{-k}$ denotes the joint observations of all agents except $k$ (in practice, agents communicate the hidden states/encodings of observations), and $\bm{a}_t^{1:k-1}$ denotes the joint actions of agents $1$ to $k-1$. 
For the agent at the first level (\textit{i.e.}, $k=1$), $\boldsymbol{a}_t^{1:k-1} = \varnothing$. Then, the agent determines its action \(a^k_t\) sampled from its policy \(\pi_k(\cdot|o^k_t, \boldsymbol{m}^{-k}_t)\) or equivalently \(\pi_k(\cdot|\boldsymbol{o}_t, \boldsymbol{a}^{1:k-1}_t)\) and sends it to the lower-level agents. 
After all, agents have determined their actions, they perform the joint actions $\bm{a}_t$, which can be seen as sampled from the joint policy \(\boldsymbol{\pi}(\cdot|s_t)\) \textit{factorized} as \(\prod_{k=1}^n \pi_k(\cdot|\boldsymbol{o}_t, \boldsymbol{a}^{1:k-1}_t)\), in the environment and get a shared reward \(r(s_t,\boldsymbol{a}_t)\) and the state transitions to next state $s'$ according to the transition probability \(p(s'|s_t,\boldsymbol{a}_t)\). 
All agents aim to maximize the expected return \(\sum_{t=0}^{\infty}\gamma^{t}r_t\), where \(\gamma\) is the discount factor. 
The state-value function and action-value function of the level-\(k\) agent are defined as follows:
\begin{gather*} 
V_{\pi_k}(s,\boldsymbol{a}^{1:k-1}) \triangleq \mathop{\mathbb{E}}\limits_{\substack{s_{1:\infty} \\ \boldsymbol{a}^{k:n}_{0} \sim {\boldsymbol{\pi}_{k:n}} \\\boldsymbol{a}_{1:{\infty}} \sim {\boldsymbol{\pi}}  }}[\sum_{t=0}^{\infty}\gamma^{t}r_t|s_0=s,\boldsymbol{a}_0^{1:k-1}=\boldsymbol{a}^{1:k-1}] \\
Q_{\pi_k}(s,\boldsymbol{a}^{1:k}) \triangleq \mathop{\mathbb{E}}\limits_{\substack{s_{1:\infty} \\ \boldsymbol{a}^{k+1:n}_{0} \sim {\boldsymbol{\pi}_{k+1:n}} \\\boldsymbol{a}_{1:{\infty}} \sim {\boldsymbol{\pi}}  }}[\sum_{t=0}^{\infty}\gamma^{t}r_t|s_0=s,\boldsymbol{a}_0^{1:k}=\boldsymbol{a}^{1:k}].
\end{gather*}

For the setting of multi-agent sequential decision-making discussed above, we have the following proposition.

\begin{proposition} \label{monotonic_improvement}
     If all the agents update their policy with individual TRPO \citep{schulman2015trust} sequentially in multi-agent sequential decision-making, then the joint policy of all agents are guaranteed to improve monotonically and converge.
\begin{proof}
The proof is given in Appendix \ref{app:proof}. 
\end{proof}
\end{proposition}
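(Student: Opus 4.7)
The plan is to reduce a full sequential sweep of TRPO updates to a finite composition of single-agent TRPO updates, each of which weakly improves the joint return, and then to invoke boundedness to obtain convergence.

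First, I would fix the decision order and, when updating agent $k$, freeze all other agents' policies $\pi_{-k}$. From agent $k$'s viewpoint the system becomes a genuine single-agent MDP: take the augmented state $\tilde s_t \triangleq (s_t, \bm{a}_t^{1:k-1})$, the action $a_t^k$, the reward $r(s_t, \bm{a}_t)$, and the augmented transition given by (i) sequentially sampling $a_t^{k+1},\ldots,a_t^n$ from the fixed lower-priority policies, (ii) transitioning $s_{t+1}\sim p(\cdot\mid s_t,\bm{a}_t)$, and (iii) sequentially sampling $a_{t+1}^{1},\ldots,a_{t+1}^{k-1}$ from the fixed higher-priority policies at the next step. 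Under this construction, the $V_{\pi_k}$ and $Q_{\pi_k}$ in the statement are precisely the state- and action-value functions of this induced MDP, and the joint return $J(\bm{\pi})$ is the expected value of $V_{\pi_k}$ under the initial distribution of $\tilde s_0$.

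Next, I would apply the TRPO monotonic improvement guarantee \citep{schulman2015trust} to this single-agent MDP. Because $\pi_k$ itself is feasible for its own trust region, the TRPO update produces $\pi_k'$ with $J(\pi_k',\pi_{-k})\ge J(\pi_k,\pi_{-k})$. Performing this update sequentially for $k=1,2,\ldots,n$, absorbing each newly updated policy into the ``fixed'' set for the subsequent agents, and chaining the inequalities yields $J(\bm{\pi}^{(t+1)})\ge J(\bm{\pi}^{(t)})$ after every complete sweep. Since $r$ is bounded and $\gamma<1$, $J$ is bounded above, and a monotonically non-decreasing bounded sequence converges.

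The main obstacle is making the single-agent reduction airtight. Specifically, I need to verify that the augmented transition kernel is well defined and time-homogeneous so that TRPO's derivation transfers verbatim; that the message $\bm{m}_t^{-k}$ received through communication supplies exactly the information needed to realize $\pi_k$ as a Markov policy on $\tilde s_t$; and that $V_{\pi_k}$ and $Q_{\pi_k}$ as defined in the paper agree with the induced MDP's value functions rather than differing by an implicit marginalization over $\bm{a}_t^{1:k-1}$. Once these identifications are made precise, the TRPO improvement lemma applies off the shelf and the remainder of the argument is routine.
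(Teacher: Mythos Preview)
Your argument is correct, but it differs structurally from the paper's. The paper builds a \emph{single} augmented MDP $\tilde M$ whose state space is layered, $\tilde S=\cup_{k=0}^{n-1}\tilde S^k$ with $\tilde S^k=S\times(\times_{i=1}^k A^i)$, so that one timestep of the original problem unrolls into $n$ sub-steps in $\tilde M$ (the state moves through layers $0\to 1\to\cdots\to n-1\to 0$, and reward is paid only at layer $0$). It then proves a separate Lemma (``agent-by-agent PPO'') showing that, in any multi-agent MDP, a TRPO update of one agent with the rest frozen satisfies $L_{\bm\pi_{\mathrm{old}}}(\bm\pi_{\mathrm{new}})=L_{\pi^i_{\mathrm{old}}}(\pi^i_{\mathrm{new}})$ and $D_{\mathrm{KL}}^{\max}(\bm\pi_{\mathrm{new}}\Vert\bm\pi_{\mathrm{old}})=D_{\mathrm{KL}}^{\max}(\pi^i_{\mathrm{new}}\Vert\pi^i_{\mathrm{old}})$, so the joint TRPO bound applies. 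Proposition~\ref{monotonic_improvement} then follows by identifying a sequential sweep in $M$ with a round of agent-by-agent PPO in $\tilde M$.

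You instead build $n$ \emph{separate} single-agent MDPs, one per level $k$, by absorbing all other agents' fixed policies into the transition kernel, and invoke single-agent TRPO directly. This bypasses the paper's intermediate lemma and the layered construction entirely, which makes your route somewhat more elementary. The trade-off is that the paper's single $\tilde M$ makes the subsequent order-independence result (Proposition~\ref{order}) almost immediate, since different orderings correspond to different agent-by-agent sweeps in the same $\tilde M$; with your per-agent MDPs, that argument would require rebuilding the reduction. Both approaches share the same mild gap, namely that the augmented state uses $s_t$ while the policies in the paper condition on $\bm o_t$; neither proof addresses this explicitly.
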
 

Proposition~\ref{monotonic_improvement} indicates that SeqComm has the performance guarantee regardless of the priority of decision-making in multi-agent sequential decision-making. However, the priority of decision-making indeed affects the optimality of the converged joint policy, and we have the following claim. 

\begin{claim}\label{claim}
The different priorities of decision-making affect the optimality of the convergence of the learning algorithm due to the relative overgeneralization problem.
\end{claim}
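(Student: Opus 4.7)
The plan is to establish Claim~\ref{claim} constructively rather than via a closed-form derivation: exhibit a simple cooperative game in which the sequential TRPO updates from Proposition~\ref{monotonic_improvement} converge to strictly different joint policies under different orderings, and identify relative overgeneralization as the mechanism responsible for the gap. Because Proposition~\ref{monotonic_improvement} already guarantees monotone improvement and convergence for \emph{every} fixed priority, it suffices to display an example whose converged fixed point depends on the priority.

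First I would recall the relative overgeneralization phenomenon in precise form. In a cooperative game with several joint optima, an agent that evaluates an action by marginalizing over the current (still exploratory) policies of the other agents averages the payoff against those policies; a ``risky'' action that is only rewarding when partners play the matching coordinated action can then look worse in expectation than a ``safe'' action whose payoff is flat across partner choices, driving the learner toward a sub-optimal equilibrium. Next I would make explicit the asymmetry introduced by the ordering: a level-$k$ agent computes its policy gradient from $Q_{\pi_k}(s,\boldsymbol{a}^{1:k})$, in which the higher-priority actions $\boldsymbol{a}^{1:k-1}$ are conditioned on and only the lower-priority actions $\boldsymbol{a}^{k+1:n}$ are marginalized out. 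The level-$1$ agent therefore marginalizes over all $n-1$ other agents, while the level-$n$ agent marginalizes over nothing, so the less marginalization an agent performs, the less exposed it is to relative overgeneralization. Reordering the agents thus changes which action dimensions are ``protected'' by conditioning and which are exposed.

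I would then instantiate this on a two-agent climbing-style matrix game with payoffs $11$ on the coordinated cell $(A,A)$, $7$ on the safe cell $(B,B)$, and $0$ on the off-diagonal. Starting from a shared pair of individual policy initializations in which one agent is slightly biased toward $A$ and the other slightly biased toward $B$, I would show that placing the $B$-biased agent at level~$1$ causes it to marginalize against the other's still-mixed policy, see $B$ as safer, and drag the pair to the sub-optimal $(B,B)$ fixed point; placing the $A$-biased agent at level~$1$ instead causes it to commit to $A$, after which the level-$2$ agent best-responds deterministically (with no marginalization) and the pair converges to the optimal $(A,A)$. Both runs are monotone and convergent by Proposition~\ref{monotonic_improvement}, but to different joint policies, which is exactly what Claim~\ref{claim} asserts.

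The main obstacle I anticipate is making the example \emph{diagnostically} clean, \ie ensuring that the difference in converged behavior is causally due to the ordering-induced marginalization asymmetry and not to an incidental policy-gradient local optimum unrelated to the priority. I would address this by (i) fixing a single shared initialization across both orderings so that only the priority is varied, and (ii) verifying the failure mode holds on an open neighbourhood of initializations rather than a knife-edge configuration, making overgeneralization the unambiguous culprit. No machinery beyond the conditioning structure of $Q_{\pi_k}$ introduced above and the convergence guarantee of Proposition~\ref{monotonic_improvement} is required.
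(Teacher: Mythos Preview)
Your high-level plan---demonstrate the claim by a concrete matrix-game example and trace the failure to relative overgeneralization---is exactly what the paper does, and your articulation of the marginalization asymmetry (level-$1$ marginalizes over everyone, level-$n$ marginalizes over no one) is a nice sharpening of the mechanism. The paper's own support for Claim~\ref{claim} is not a formal proof either: it presents a one-step $3\times 3$ payoff matrix (Figure~\ref{fig:mat_payoff}) and runs MAPPO under the orderings $A\!\to\!B$, $B\!\to\!A$, \emph{Simultaneous}, and a learned order, reporting empirically that only $A\!\to\!B$ reaches the optimum (Figure~\ref{fig:mat_performance}).

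Where your proposal diverges, and where it has a genuine gap, is in the choice of example. The paper's matrix is deliberately \emph{asymmetric in the roles}: against a uniform partner, agent $A$'s optimal action $a_1$ already has the highest marginal payoff ($8$ versus $0.67$), whereas agent $B$'s optimal action $b_1$ has the lowest ($0$ versus $4.67$). So only $B$ is vulnerable to overgeneralization, and the ordering effect is intrinsic to the game: $A$ first shields $B$ from its own pathology, $B$ first does not. Your $2\times 2$ climbing game with payoffs $(11,7,0)$ is symmetric in the roles, so you must manufacture the asymmetry through initialization. But the specific dynamics you describe do not actually hold: if the $B$-biased agent is placed at level~$1$ and marginalizes over the $A$-biased level-$2$ policy (say $60\%$ on $A$), its expected payoff for $A$ is $0.6\cdot 11=6.6$ versus $0.4\cdot 7=2.8$ for $B$---the level-$1$ agent is pulled \emph{toward} $A$, not away from it, the opposite of what you assert. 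More generally, with off-diagonal zeros and $11>7$, action $A$ dominates in expectation whenever the partner is at least $7/18\approx 0.39$ on $A$, so a ``slightly $A$-biased'' partner makes $A$ look strictly better to the level-$1$ agent. Your example therefore does not produce the claimed divergence in fixed points.

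The fix is to adopt an asymmetric payoff structure as the paper does (so that one agent but not the other is susceptible to overgeneralization regardless of initialization), or to use the classical penalized climbing game with large negative off-diagonals so that the risky action is genuinely dominated under marginalization; in either case the ordering effect becomes a property of the game rather than of a fragile initialization.
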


\begin{figure}[t!]
\centering
\setlength{\abovecaptionskip}{3pt}
\begin{subfigure}{0.43\linewidth}
\vspace{0.3cm}
    \centering
    \setlength{\extrarowheight}{3pt}
    \begin{tabular}{llccc}
    	&    & \multicolumn{3}{c}{{\blue Agent $B$}}  \\
    	&    & {\blue $b_1$} & {\color[HTML]{00009B} $b_2$} & {\blue $b_3$} \\ \cline{3-5} 
    	&  {\red $a_1$} & \multicolumn{1}{|c|}{$12$} & \multicolumn{1}{c|}{$6$}    & \multicolumn{1}{c|}{$6$}  \\ \cline{3-5} 
    	&  {\red $a_2$} & \multicolumn{1}{|c|}{$-6$} & \multicolumn{1}{c|}{$8$}    & \multicolumn{1}{c|}{$0$}  \\ \cline{3-5} 
    	\multicolumn{1}{c}{\multirow{-3}{*}{\rotatebox{90}{\red Agent $A$}}} & {\red $a_3$} & \multicolumn{1}{|c|}{$-6$}    & \multicolumn{1}{c|}{$0$}           & \multicolumn{1}{c|}{$8$}                      \\ \cline{3-5} 
    \end{tabular}
    \vspace{0.2cm}
    \caption{payoff matrix of the game}
    \label{fig:mat_payoff}
\end{subfigure}
\hspace{0.2cm}
\begin{subfigure}{0.5\linewidth}
    \centering
    \setlength{\abovecaptionskip}{2pt}
	\includegraphics[width=0.75\linewidth]{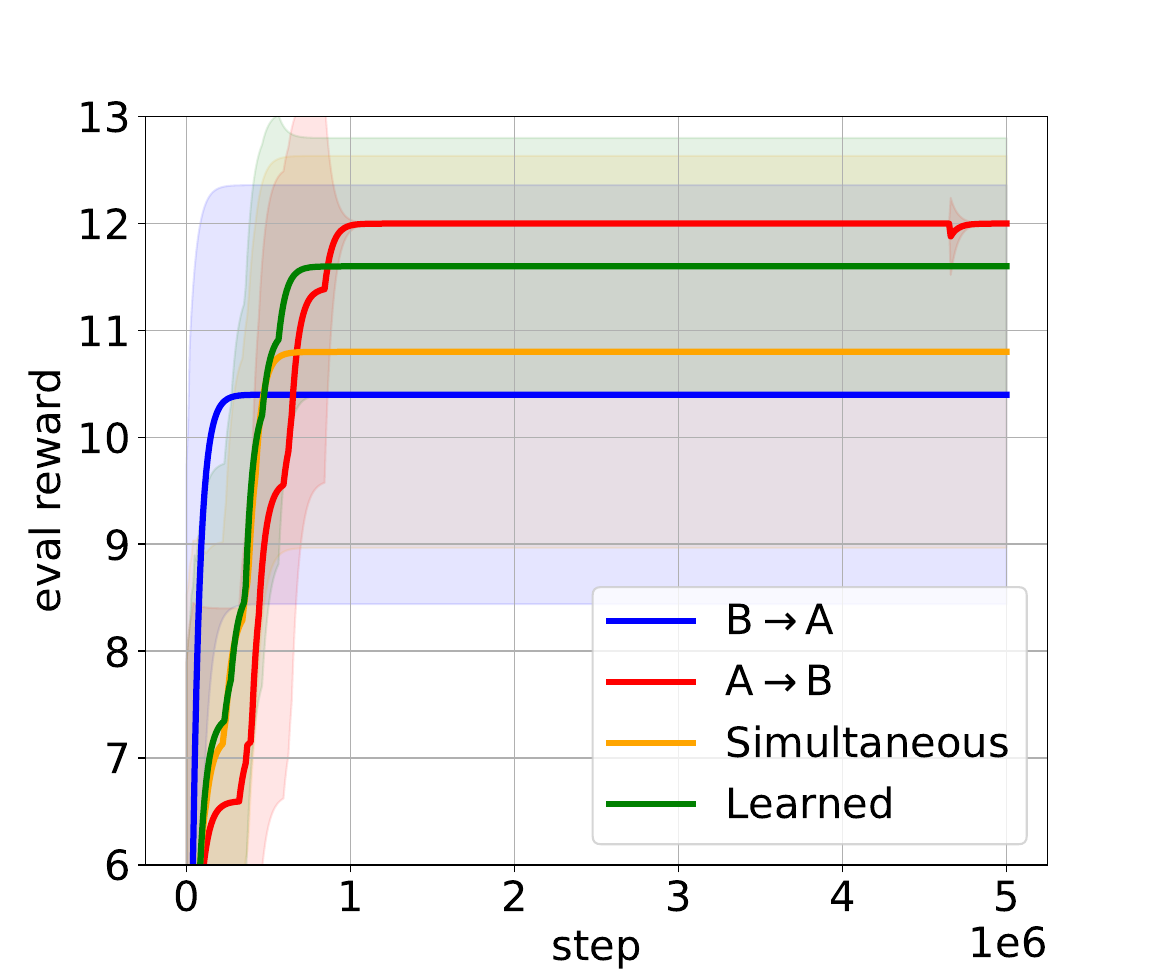}
	\caption{evaluations of different methods}
    \label{fig:mat_performance}
\end{subfigure}
\caption{\subref{fig:mat_payoff} Payoff matrix for a one-step game. There are multiple local optima. \subref{fig:mat_performance} Evaluations of different methods for the game in terms of the mean reward and standard deviation of ten runs. \(A \rightarrow B\), \(B \rightarrow A\), \emph{Simultaneous}, and \emph{Learned} represent that agent $A$ makes decisions first, agent $B$ makes decisions first, two agents make decisions simultaneously, and there is another learned policy determining the priority of decision making, respectively. MAPPO \citep{yu2021surprising} is used as the backbone.}
\label{fig:matrix_game}
\end{figure}

\begin{figure*}[!t]
  \centering
  \setlength{\abovecaptionskip}{3pt}
  \includegraphics[width=.95\textwidth]{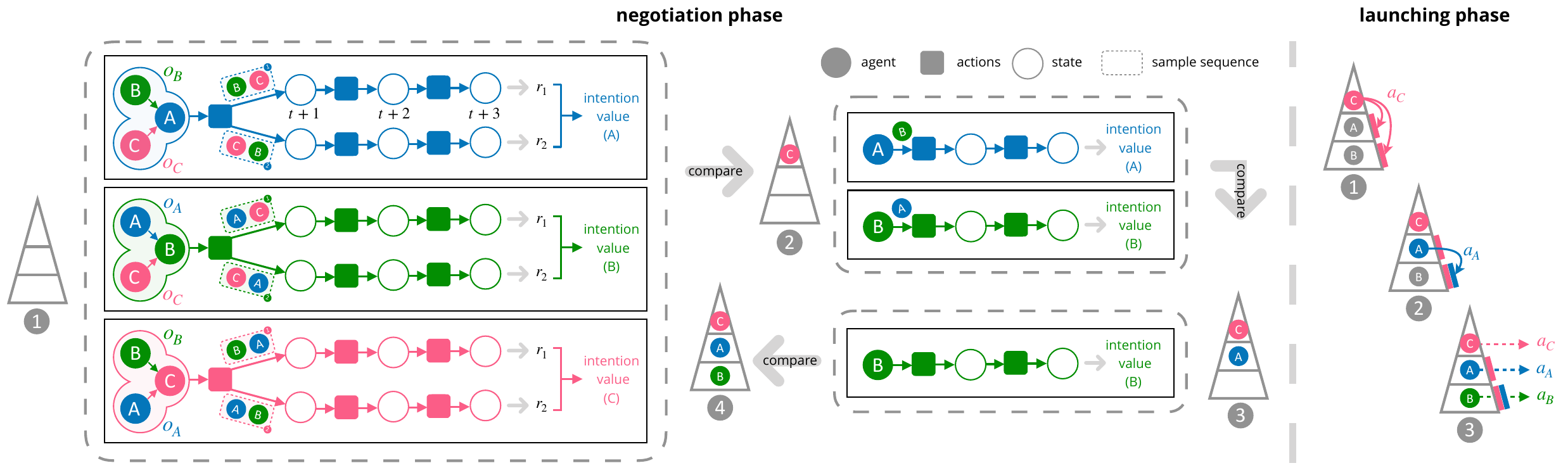}
  \caption{Overview of SeqComm. SeqComm has two communication phases, the negotiation phase (\textit{left}) and the launching phase (\textit{right}). In the negotiation phase, agents communicate hidden states of observations with others and obtain their own intention. The priority of decision-making is determined by sharing and comparing the value of all the intentions. In the launching phase, the agents who hold the upper-level positions will make decisions prior to the lower-level agents. Besides, their actions will be shared with anyone that has not yet made decisions.}
  \label{fig:mech}
\end{figure*}

We use a one-step matrix game as an example, as illustrated in Figure~\ref{fig:mat_payoff}, to demonstrate the influence of the priority of decision-making on the learning process. 
Due to relative overgeneralization  \citep{wei2018multiagent}, agent $B$ tends to choose $b_2$ or $b_3$. 
Specifically, $b_2$ or $b_3$ in the suboptimal equilibrium is a better choice than $b_1$ in the optimal equilibrium when matched with arbitrary actions from agent $A$. 
Therefore, as shown in Figure \ref{fig:mat_performance}, \(B \rightarrow A\) (\textit{i.e.}, agent $B$ makes decisions before $A$, and $A$'s policy conditions on the action of $B$) and \textit{Simultaneous} (\textit{i.e.}, two agents make decisions simultaneously and independently) are easily trapped into local optima. 
However, if agent $A$ goes first, things can be different, as \(A \rightarrow B\) achieves the optimum. 
As long as agent $A$ does not suffer from relative overgeneralization, it can help agent $B$ get rid of local optima by narrowing down the search space of $B$. 
Besides, a policy that determines the priority of decision-making can be learned under the guidance of the state-value function, denoted as \textit{Learned}. 
It obtains better performance than \(B \rightarrow A\) and \textit{Simultaneous}, which indicates that dynamically determining the order during policy learning can be beneficial as we do not know the optimal priority in advance. 

\begin{remark}
The priority (\ie order) of decision-making affects the optimality of the converged joint policy in multi-agent sequential decision-making, thus it is critical to determine the order. However, learning the order directly requires an additional centralized policy in execution, which is not generalizable in a scenario where the number of agents varies. Moreover, its learning complexity exponentially increases with the number of agents, making it infeasible in many cases.
\end{remark}

\section{Sequential Communication}

In this paper, we cast our eyes in another direction and resort to the world model, which is the dynamic model of the environment. Ideally, we can randomly sample candidate order sequences, evaluate them under the world model (see Section \ref{neg_phs}), and choose the order sequence that is deemed the most promising under the true dynamic. SeqComm is designed based on this principle to determine the priority of decision-making via communication.


In SeqComm, communication is separated into phases serving different purposes and multi-round communication in one phase is possible. One is the \textit{negotiation} phase for agents to determine the priority of decision-making. Another is the \textit{launching} phase for agents to act conditioning on actual actions upper-level agents will take to implement \textit{explicit coordination via communication}. The overview of SeqComm is illustrated in Figure~\ref{fig:mech}. Each SeqComm agent consists of a policy, a critic, and a world model, as illustrated in Figure \ref{fig:arch}, and the parameters of all networks are shared across agents \citep{gupta2017cooperative}.

\textbf{World Model.} The world model is needed to predict and evaluate future trajectories. 
SeqComm, unlike previous works \citep{kim2021communication,du2021learning,pretorius2021learning}, can utilize received hidden states of other agents in the first round of communication to model more precise environment dynamics for the explicit coordination in the next round of communication. 
Once an agent can access other agents' hidden states, it shall have adequate information to estimate their actions since all agents are homogeneous and parameter-sharing. 
Therefore, the world model \(\mathcal{M}(\cdot)\) takes as input the joint hidden states \(\boldsymbol{h}_t = \{h^1_t ,\ldots, h^n_t \}\) and actions $\boldsymbol{a}_t$, and predicts the next joint observations and reward. In practice, before the inputs pass into the world model, the attention module \(\rm AM_{w}\) is utilized to process the input.

\begin{equation}
\nonumber
\hat{\boldsymbol{o}}_{t+1},\hat{r}_{t+1} = \mathcal{M}_i({\rm AM_{w}}(\boldsymbol{h}_t, \boldsymbol{a}_t) ).
\end{equation}

The reason that we adopt the attention module is to entitle the world model to be generalizable in the scenarios where additional agents are introduced or existing agents are removed.

\subsection{Negotiation Phase}
\label{neg_phs}
In the negotiation phase, the observation encoder first takes \(o_t\) as input and outputs a hidden state \(h_t\) to compress the information, which is used to communicate with others. Note that many studies \citep{DBLP:conf/nips/DingHL20,jiang2018learning} found that redundant messages may impair the learning process empirically. In more detail, the model can converge slowly or sometimes lead to a worse sub-optimal. Agents then determine the priority of decision-making by \textit{intentions} which is established and evaluated based on the world model. 

\textbf{Priority of Decision-Making.}
Intention is the key element in determining the priority of decision-making. The notion of intention is described as an agent's future behavior in previous works \citep{rabinowitz2018machine,raileanu2018modeling,kim2021communication}. However, we define the \textit{intention} as an agent's future behavior \textit{without considering others}.

As mentioned before, an agent's intention considering others can lead to circular dependencies and cause miscoordination. 
By our definition, the intention of an agent should be depicted as all future trajectories considering that agent as the first mover and ignoring the others. 
However, there are many possible future trajectories as the priority of the rest of the agents is \emph{unfixed}. In practice, we use the Monte Carlo method to estimate the intention value based on all future trajectories. Note that it is uniform across priorities for unfixed agents. Each order should be treated equally since we do not have any prior for the distribution.

Taking agent \(i\) at timestep \(t\) to illustrate, it firstly considers itself as the first-mover and produces its action only based on the joint hidden states, \(\hat{a}^i_t \sim \pi_i(\cdot| {\rm AM_{a}}(\boldsymbol{h}_t, \emptyset)\), where we again use an attention module
\({\rm AM_{a}}\) to handle the input. 
For the order sequence of lower-level agents, we randomly sample a set of order sequences from unfixed agents. 
Assume agent \(j\) is the second-mover, agent \(i\) models \(j\)'s action by considering the upper-level action following its own policy \(\hat{a}^j_t \sim \pi_i(\cdot|{\rm AM_{a}}(\boldsymbol{h}_t, \hat{a}^i_t))\). 
The same procedure is applied to predict the actions of all other agents following the sampled order sequence. 
Based on the joint hidden states and predicted actions, the next joint observations \(\hat{\boldsymbol{o}}_{t+1}\) can be predicted by the world model \(\mathcal{M}\). 
The length of the predicted future trajectory is \(H\) and it can then be written as \(\tau^t=\{\hat{\boldsymbol{o}}_{t+1},\hat{\boldsymbol{a}}_{t+1},\ldots,\hat{\boldsymbol{o}}_{t+H}, \hat{\boldsymbol{a}}_{t+H}\}\) by repeating the procedure aforementioned. 

Then, the agent uses its critic (state-value function) to evaluate the future trajectory and output value \(v_{\tau^t}\). 
The intention value is defined as the average value of \(F\) future trajectories with different sampled order sequences. \textit{Through the critic, we have linked
the order and agent performance together.}

After all the agents have computed their intentions and the corresponding value, they again communicate their intention values to others. Then, agents would compare and choose the agent with the highest intention value to be the first mover. 
The priority of lower-level decision-making follows the same procedure with the upper-level agents fixed. Note that some agents may communicate intention values with others multiple times until the priority of decision-making is finally determined.

\begin{wrapfigure}{!t}{0.5\textwidth}
\vspace{-0.8cm}
\centering
\includegraphics[width=0.46\textwidth]{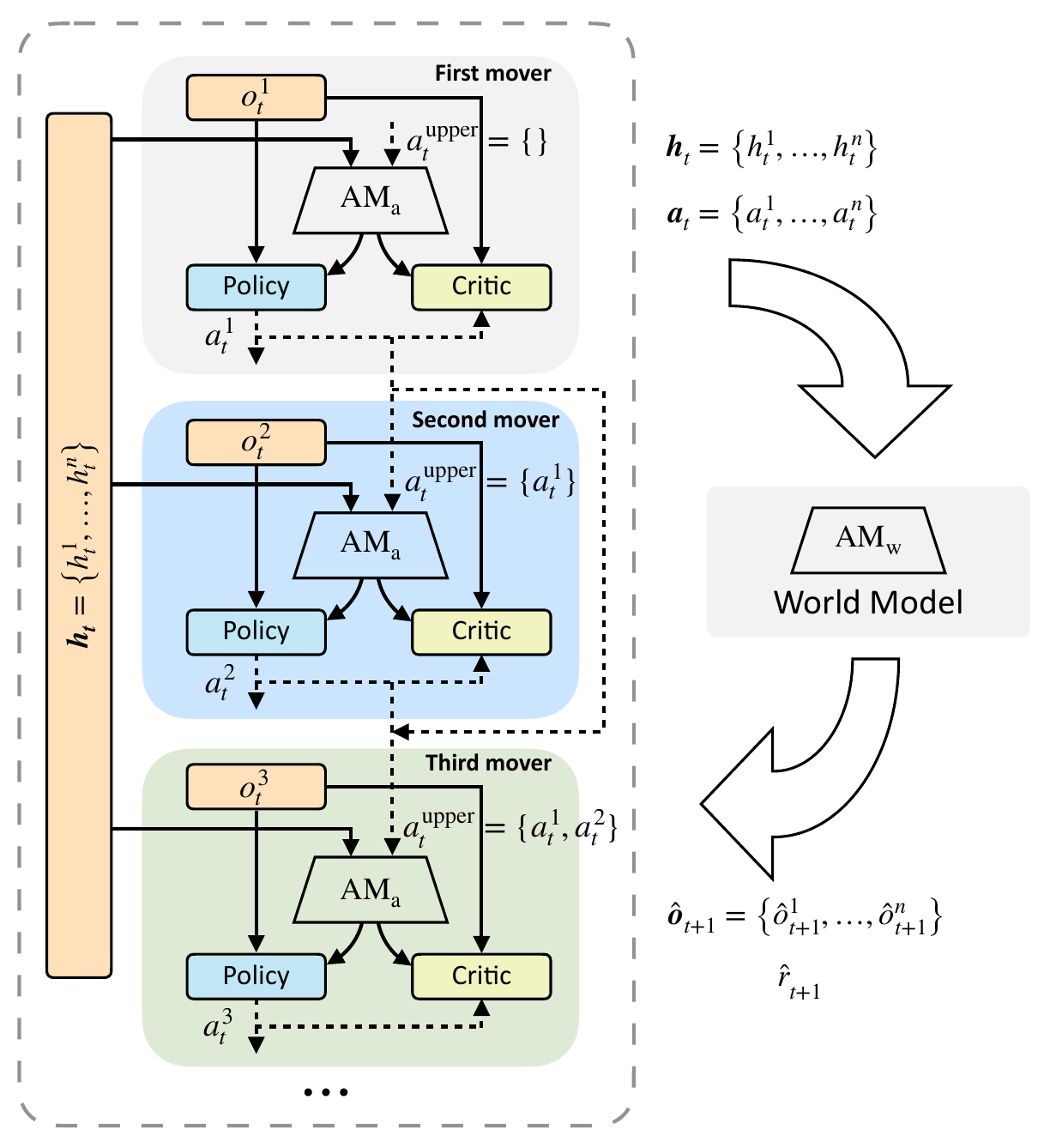}
\caption{Architecture of SeqComm. The critic and policy of each agent take input as its own observation and received messages. The world model takes as input the joint hidden states and predicted joint actions.} 
\vspace{-0.6cm}
\label{fig:arch}
\end{wrapfigure}

\subsection{Negotiation Phase for Local Communication} 

The full communication version of SeqComm is constructed based on theoretical derivation. It has a theoretical guarantee to some extent, but some assumptions, \,  e.g., broadcast communication, can be unrealistic and incur lots of communication overhead. Therefore, we provide another version of SeqComm in scenarios where agents can only communicate with nearby agents (agents within a limited communication range).

In more detail, the agent first calculates its intention value based only on the hidden states of nearby agents. After comparing with the intention values of nearby agents (intention values are communicated with the nearby agents), the agent can determine the upper-level and lower-level nearby agents. Unlike the previous version of SeqComm, agents cannot distinguish the detailed order sequence of the upper-level nearby agents since their communication ranges may not overlap. Therefore, the intention values are calculated and communicated among agents for \textit{only one time}. The local communication version greatly reduces communication overhead, making it more suitable for many real applications.

For more details of the algorithms, please refer to the Appendix \ref{app:imple} for the pseudo-code.

\subsection{Launching Phase}

As for the launching phase, agents communicate to obtain additional information to make decisions. 
Apart from the received hidden states from the last phase, we allow agents to get what \textit{actual} actions the upper-level agents will take in execution, while other studies can only infer others' actions by opponent modeling \citep{rabinowitz2018machine,raileanu2018modeling} or communicating intentions \citep{kim2021communication}. 
Therefore, miscoordination can be naturally avoided, and a better cooperation strategy is possible since lower-level agents can adjust their behaviors accordingly.

A lower-level agent $i$ make a decision following the policy \(\pi_i(\cdot|{\rm AM_{a}}(\boldsymbol{h}_t, \boldsymbol{a}_t^{upper}))\), where \(\boldsymbol{a}_t^{upper}\) means received actual actions from all upper-level agents. 
As long as the agent has decided on an action, it will send the action to all other lower-level agents through the communication channel. 
\textit{Note that the actions are executed simultaneously and distributedly in execution, though agents make decisions sequentially. } 

\subsection{Theoretical Analysis}
\label{sec:theoretical}

As intention values determine the priority of decision-making, SeqComm is likely to choose different orders \textit{at different timesteps} during training. However, we have the following proposition that theoretically guarantees the performance of the learned joint policy under SeqComm.

\begin{proposition}
\label{order}
    The monotonic improvement and convergence of the joint policy in SeqComm are independent of the priority of decision-making of agents at each timestep.  
    \begin{proof}
        The proof is given in Appendix \ref{app:proof}.
    \end{proof}
\end{proposition}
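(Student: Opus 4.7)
The plan is to show that Proposition~\ref{order} reduces to Proposition~\ref{monotonic_improvement} by arguing that the TRPO monotonic improvement guarantee is a statement about the joint policy $\boldsymbol{\pi}(\cdot|s)$, not about any particular factorization of it, so whichever order SeqComm happens to pick at a given state is irrelevant for the improvement bound.

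First I would recall the standard TRPO lower bound
\[
J(\tilde{\boldsymbol{\pi}}) \ge L_{\boldsymbol{\pi}}(\tilde{\boldsymbol{\pi}}) - C \max_{s} D_{\mathrm{KL}}\bigl(\boldsymbol{\pi}(\cdot|s)\,\|\,\tilde{\boldsymbol{\pi}}(\cdot|s)\bigr),
\]
and observe that both the surrogate $L_{\boldsymbol{\pi}}$ and the joint KL on the right depend only on the joint conditional distribution over joint actions, not on any per-agent chain-rule decomposition. The role of the ordering in the proof of Proposition~\ref{monotonic_improvement} is only to organize the sequential per-agent TRPO updates into a composite joint update whose joint KL and surrogate remain controlled.

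Next, for SeqComm with a state-dependent ordering $\sigma(s)$, I would use the KL chain rule $D_{\mathrm{KL}}(\boldsymbol{\pi}(\cdot|s)\|\tilde{\boldsymbol{\pi}}(\cdot|s)) = \sum_k \mathbb{E}[D_{\mathrm{KL}}(\pi_{\sigma_k(s)}\|\tilde{\pi}_{\sigma_k(s)})]$ to show that the sequential per-level TRPO steps at state $s$ under order $\sigma(s)$ still compose into a single joint update that satisfies the joint KL constraint and does not decrease the joint surrogate, by the same telescoping argument used for Proposition~\ref{monotonic_improvement}. Because parameters are shared across agents, all per-level updates act on the same underlying policy network, so mixing orderings across states is well-defined. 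Plugging this back into the joint TRPO bound gives $J(\tilde{\boldsymbol{\pi}}) \ge J(\boldsymbol{\pi})$ regardless of the choice of $\sigma$ at each state, and convergence follows from monotonicity of the sequence $\{J(\boldsymbol{\pi}_t)\}$ together with boundedness of rewards.

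The main obstacle I expect is in the second step, where Proposition~\ref{monotonic_improvement}'s telescoping argument was naturally written with a single global ordering: when ``level $k$'' refers to different agents at different states, the per-level advantages derived from $Q_{\pi_k} - V_{\pi_k}$ are no longer tied to a state-invariant agent index. The cleanest workaround is to lift the argument to the joint policy and never speak of a per-agent level outside of a fixed state, so that within any single state the chained improvement proceeds exactly as in Proposition~\ref{monotonic_improvement}, while across states only the joint quantities $L_{\boldsymbol{\pi}}$ and the joint KL enter the TRPO bound, both of which are order-agnostic. If that reformulation goes through, Proposition~\ref{order} follows immediately.
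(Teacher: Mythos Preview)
Your plan takes a different route from the paper, and the central step has a gap.

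The paper does not argue by showing that the composite per-level update satisfies a single joint TRPO bound. Instead, in the proof of Proposition~\ref{monotonic_improvement} it constructs an augmented MDP $\tilde{M}$ whose state at layer $k$ is $(s,a^1,\ldots,a^k)$, so that the action-conditioned policies $\pi_k(\cdot\mid s,\boldsymbol{a}^{1:k-1})$ become ordinary state-conditioned policies in $\tilde{M}$. Multi-agent sequential decision-making in $M$ is then identified with agent-by-agent PPO in $\tilde{M}$. A preliminary lemma (Lemma~\ref{agent-by-agent}) shows that updating \emph{one} agent by TRPO while all others are held fixed already yields $J(\boldsymbol{\pi}_{\mathrm{new}})\ge J(\boldsymbol{\pi}_{\mathrm{old}})$; Proposition~\ref{order} then follows as a one-line corollary, because the per-update improvement holds regardless of which agent is updated, hence regardless of the order across rounds.

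The gap in your plan is the claim that the sequential per-level TRPO steps ``compose into a single joint update that satisfies the joint KL constraint and does not decrease the joint surrogate.'' The $k$th per-level step is taken with respect to the intermediate policy $\boldsymbol{\pi}^{(k-1)}=(\tilde\pi_1,\ldots,\tilde\pi_{k-1},\pi_k,\ldots,\pi_n)$, not the original $\boldsymbol{\pi}$, so the guarantees you obtain are of the form $L_{\boldsymbol{\pi}^{(k-1)}}(\boldsymbol{\pi}^{(k)})-C\,D_{\mathrm{KL}}^{\max}(\pi_k^{(k-1)}\|\tilde\pi_k)\ge 0$. These do not sum to $L_{\boldsymbol{\pi}}(\tilde{\boldsymbol{\pi}})-C\,D_{\mathrm{KL}}^{\max}(\boldsymbol{\pi}\|\tilde{\boldsymbol{\pi}})\ge 0$; your KL chain rule handles the KL side, but the surrogate side does not decompose in a matching way because $L$'s base policy shifts at each step. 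The telescoping that actually works is on $J$ itself: $J(\boldsymbol{\pi}^{(k)})\ge J(\boldsymbol{\pi}^{(k-1)})$ for each $k$. But proving that inequality in the action-conditioned setting is exactly what Lemma~\ref{agent-by-agent} together with the $\tilde{M}$ construction supplies (the lemma needs each agent's policy to depend only on the state, which is why the actions must be folded into the state). Your ``lift to the joint policy'' workaround does not replace this step; you still need to show that the level-$k$ individual surrogate equals the joint surrogate when only level $k$ changes, which is precisely Lemma~\ref{agent-by-agent}'s content once the conditioning is absorbed into the state.
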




The priority of decision-making is chosen under the world model, thus the compounding errors in the world model can result in discrepancies between the predicted returns of the same order under the world model and the true dynamics. We then analyze the monotonic improvement for the joint policy under the world model based on \cite{janner2019mbpo}. 

\begin{theorem}
 \label{model_error}
Let the expected total variation between two transition distributions be bounded at each timestep as $\max_t \mathbb{E}_{s \sim \boldsymbol{\pi}_{\beta,t}} [D_{TV}(p(s^{\prime}|s,\boldsymbol{a})||\hat{p}(s^{\prime}|s,\boldsymbol{a}))] \leq \epsilon_m $, and the policy divergences at level $k$ be bounded as $\max_{s,\boldsymbol{a}^{1:k-1}} D_{TV}(\pi_{\beta,k}(a^k|s,\bm{a}^{1:k-1})||\pi_k(a^k|s,\bm{a}^{1:k-1})) \le \epsilon_{\pi_k}$, where $\bm{\pi}_\beta$ is the data collecting policy for the model and $\hat{p}(s^{\prime}|s,\boldsymbol{a})$ is the transition distribution under the model. Then the model return $\hat{\eta}$ and true return $\eta$ of the policy $\bm{\pi}$ are bounded as:
\begin{equation}
  \begin{split}
  \nonumber
    \hat{\eta}[\boldsymbol{\pi}] & \ge \eta[\boldsymbol{\pi}]
    -\\& \underset{C(\epsilon_m,\boldsymbol{\epsilon}_{\pi_{1:n}})}{\underbrace{[\frac{2 \gamma r_{\max}(\epsilon_m+2\sum_{k=1}^n\epsilon_{\pi_k} )}{(1-\gamma)^2} +\frac{4r_{\max}\sum_{k=1}^{n}\epsilon_{\pi_k}}{(1-\gamma)}]}}.
\end{split}
\end{equation}
\begin{proof}
The proof is given in Appendix \ref{model:proof}. 
\end{proof}
\end{theorem}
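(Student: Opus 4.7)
The plan is to reduce the multi-agent setting to the single-policy model-error bound of Janner et al.\ (MBPO) by first showing that the total-variation distance between two factorized joint policies is controlled by the sum of the per-level TV distances, and then plugging this aggregated policy divergence into the standard MBPO return-gap inequality.

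First, I would exploit the sequential factorization $\boldsymbol{\pi}(\boldsymbol{a}|s)=\prod_{k=1}^{n}\pi_k(a^{k}|s,\boldsymbol{a}^{1:k-1})$ (and likewise for $\bm{\pi}_\beta$) to bound the joint TV divergence by a telescoping argument. Introducing the hybrid distributions $\mu_k(\boldsymbol{a}|s)=\prod_{j\le k}\pi_{\beta,j}\prod_{j>k}\pi_j$ and applying the triangle inequality, each consecutive pair $\mu_{k-1}$ and $\mu_k$ differs only at level $k$, so
\begin{equation*}
D_{TV}\!\bigl(\bm{\pi}_\beta(\cdot|s)\,\|\,\bm{\pi}(\cdot|s)\bigr)\;\le\;\sum_{k=1}^{n}\mathbb{E}_{\bm{a}^{1:k-1}}\bigl[D_{TV}\!\bigl(\pi_{\beta,k}(\cdot|s,\bm{a}^{1:k-1})\,\|\,\pi_k(\cdot|s,\bm{a}^{1:k-1})\bigr)\bigr]\;\le\;\sum_{k=1}^{n}\epsilon_{\pi_k},
\end{equation*}
using the per-level hypothesis. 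This packaging lets me treat the joint policy divergence as a single scalar $\epsilon_{\bm{\pi}}\triangleq\sum_{k=1}^{n}\epsilon_{\pi_k}$, so the multi-agent problem collapses to a standard single-policy model-error analysis.

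Next, I would invoke the MBPO-style bound (Lemma B.3 / Theorem 4.3 of Janner et al.) which, for a single-agent MDP with model TV error $\epsilon_m$ and behavioral/target policy TV divergence $\epsilon_\pi$, yields
\begin{equation*}
\bigl|\hat{\eta}[\pi]-\eta[\pi]\bigr|\;\le\;\frac{2\gamma r_{\max}(\epsilon_m+2\epsilon_\pi)}{(1-\gamma)^{2}}+\frac{4 r_{\max}\,\epsilon_\pi}{1-\gamma}.
\end{equation*}
The proof of that lemma uses the coupling/simulation-lemma technique: unroll the two Markov chains (true vs.\ model, behavior vs.\ target), bound the per-step state-action distribution shift by $\epsilon_m+2\epsilon_\pi$, sum the geometric series $\sum_t \gamma^t t$ to obtain the $1/(1-\gamma)^2$ factor, and add the $4 r_{\max}\epsilon_\pi/(1-\gamma)$ term that arises from measuring rewards under the target policy rather than the behavior policy. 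I would transport this derivation verbatim into the Dec-POMDP-Com setting, because Proposition~\ref{monotonic_improvement} and the factorization above legitimize treating the joint policy as a single policy over the product action space. Substituting $\epsilon_\pi=\sum_{k=1}^{n}\epsilon_{\pi_k}$ then gives exactly the stated bound $C(\epsilon_m,\bm{\epsilon}_{\pi_{1:n}})$.

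The main obstacle I anticipate is the first step: carefully justifying the chain-rule bound on joint TV when the per-level divergences are conditioned on the \emph{same} preceding actions $\bm{a}^{1:k-1}$ under potentially different distributions. One has to verify that the $\max_{s,\bm{a}^{1:k-1}}$ in the hypothesis dominates the expectations appearing in the telescoping sum, and that the hybrid measures $\mu_k$ are well-defined probability measures on the joint action space. Once this algebraic bookkeeping is done, the rest is a mechanical port of Janner et al.'s proof, with the only change being the replacement of the scalar policy-TV by the sum $\sum_{k=1}^{n}\epsilon_{\pi_k}$ and the observation that neither the reward model nor the state transition depends on the decision order, so the bound holds uniformly over the priority chosen by SeqComm (consistent with Proposition~\ref{order}).
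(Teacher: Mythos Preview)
Your proposal is correct and follows essentially the same route as the paper: first collapse the per-level TV bounds into a single joint-policy TV bound $\sum_{k=1}^n\epsilon_{\pi_k}$ via the chain-rule/telescoping argument (the paper does this by iterating the two-factor joint-TVD lemma rather than with explicit hybrids $\mu_k$, but these are equivalent), and then plug this into the MBPO return-gap machinery with the add-and-subtract of $\eta[\bm{\pi}_\beta]$ to produce the two terms whose sum is $C(\epsilon_m,\bm{\epsilon}_{\pi_{1:n}})$. The only cosmetic difference is that the paper restates and re-proves the branched-return bound as its own lemma and applies it separately to $\hat{\eta}[\bm{\pi}]-\eta[\bm{\pi}_\beta]$ and $\eta[\bm{\pi}_\beta]-\eta[\bm{\pi}]$, whereas you invoke the final MBPO inequality directly; the substance is identical.
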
 

\begin{remark} 
Theorem \ref{model_error} provides a useful relationship between the compounding errors and the policy update. As long as we improve the return under the true dynamic by more than the gap, $C(\epsilon_m,\boldsymbol{\epsilon}_{\pi_{1:n}})$, we can guarantee the policy improvement under the world model. If no such policy exists to overcome the gap, it implies the model error is too high, that is, there is a large discrepancy between the world model and true dynamics. Thus the order sequence obtained under the world model is not reliable. Such an order sequence is almost the same as a random one. Though a random order sequence also has the theoretical guarantee of Proposition~\ref{order}, we will show in Section~\ref{sec:abla} that a random order sequence leads to a poor local optimum empirically. 
\end{remark}

\section{Experiments}

SeqComm is currently instantiated based on MAPPO \citep{yu2021surprising}. We evaluate SeqComm nine maps in StarCraft multi-agent challenge v2 (SMACv2) \citep{ellis2024smacv2}.


		
In the experiments, SeqComm and baselines are parameter-sharing for fast convergence \citep{gupta2017cooperative,terry2020parameter}. We have fine-tuned the baselines for a fair comparison. \textit{The world model in the SMACv2 environment is trained from scratch and kept fine-tuned in the learning process.} Therefore, no extra prior knowledge is provided. Please refer to the Appendix for the hyperparameter settings. All results are presented in terms of the mean and standard deviation of five runs with different random seeds.

\subsection{Results}

\begin{figure*}[t!]
	\centering
	\begin{subfigure}{0.32\textwidth}
		\centering
		\setlength{\abovecaptionskip}{3pt}
		\includegraphics[width=1.1\textwidth]{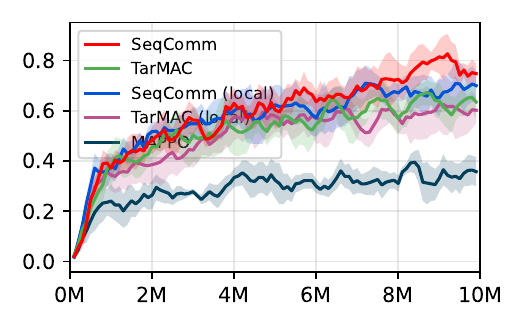}
		\caption{protoss\_5\_vs\_5}
	\end{subfigure}
        \hspace{1mm}
	\begin{subfigure}{0.32\textwidth}
		\centering
		\setlength{\abovecaptionskip}{3pt}
		\includegraphics[width=1.1\textwidth]{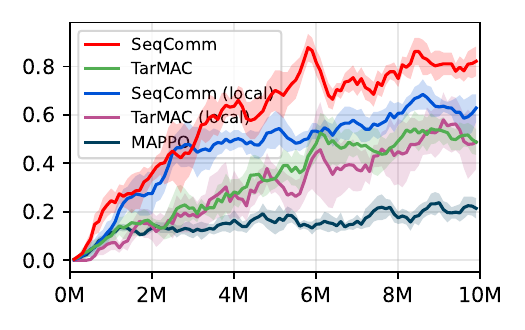}
		\caption{protoss\_10\_vs\_10}
	\end{subfigure}
        \hspace{1mm}
	\begin{subfigure}{0.32\textwidth}
		\centering
		\setlength{\abovecaptionskip}{3pt}
		\includegraphics[width=1.1\textwidth]{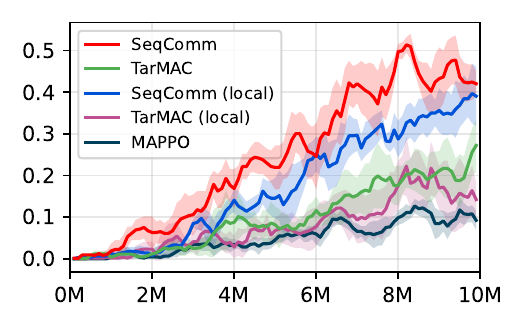}
		\caption{protoss\_10\_vs\_11}
	\end{subfigure}
        \hspace{1mm}
	\begin{subfigure}{0.32\textwidth}
		\centering
		\setlength{\abovecaptionskip}{3pt}
		\includegraphics[width=1.1\textwidth]{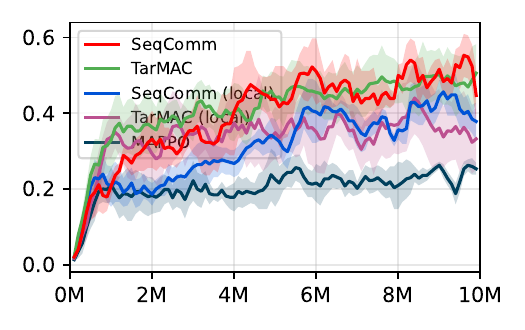}
		\caption{terran\_5\_vs\_5}
	\end{subfigure}
        \hspace{1mm}
        \begin{subfigure}{0.32\textwidth}
		\centering
		\setlength{\abovecaptionskip}{3pt}
		\includegraphics[width=1.1\textwidth]{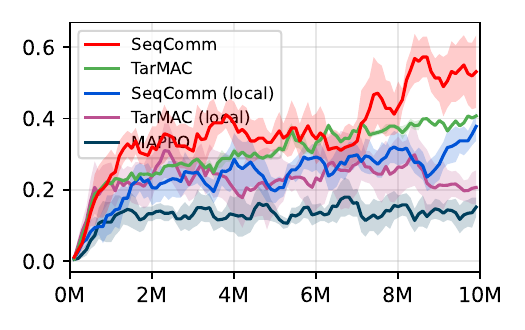}
		\caption{terran\_10\_vs\_10}
	\end{subfigure}
        \hspace{1mm}
        \begin{subfigure}{0.32\textwidth}
		\centering
		\setlength{\abovecaptionskip}{3pt}
		\includegraphics[width=1.1\textwidth]{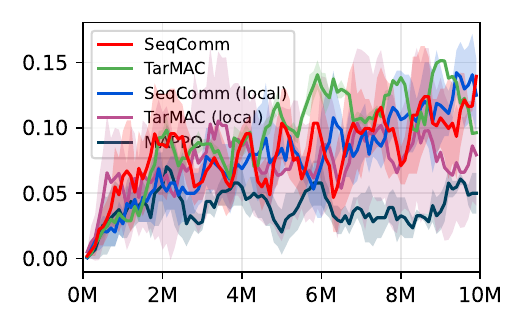}
		\caption{terran\_10\_vs\_11}
	\end{subfigure}
        \hspace{1mm}
	\begin{subfigure}{0.32\textwidth}
		\centering
		\setlength{\abovecaptionskip}{3pt}
		\includegraphics[width=1.1\textwidth]{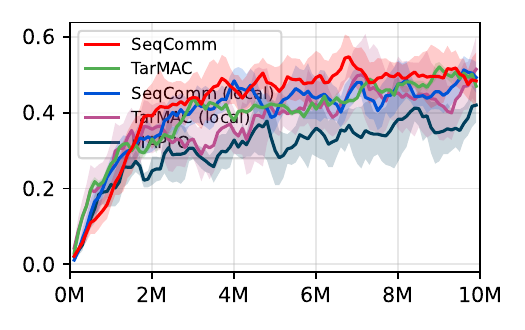}
		\caption{zerg\_5\_vs\_5}
	\end{subfigure}
        \hspace{1mm}
        \begin{subfigure}{0.32\textwidth}
		\centering
		\setlength{\abovecaptionskip}{3pt}
		\includegraphics[width=1.1\textwidth]{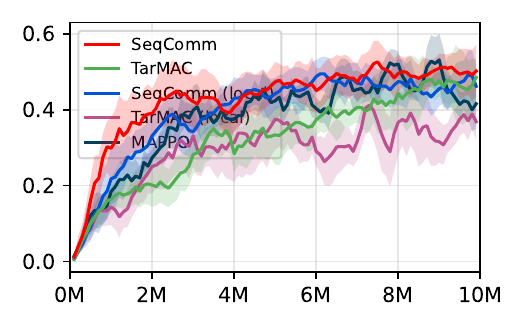}
		\caption{zerg\_10\_vs\_10}
	\end{subfigure}
        \hspace{1mm}
        \begin{subfigure}{0.32\textwidth}
		\centering
		\setlength{\abovecaptionskip}{3pt}
		\includegraphics[width=1.1\textwidth]{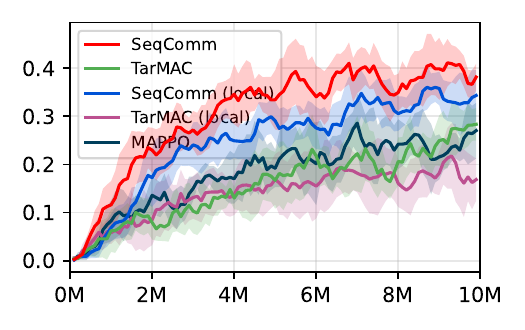}
		\caption{zerg\_10\_vs\_11}
	\end{subfigure}
	\caption{Learning curves of SeqComm and baselines in nine SMACv2 maps.}
	\label{fig:smac_curve}
\end{figure*}

\textbf{SMACv2.} 
We have evaluated our method on the \textit{most representative and challenging} multi-agent environment currently available. Compared with SMAC \citep{samvelyan19smac}, SMACv2 has some better properties, \ie stochasticity and partial observability. In other words, agents need to cooperate more in the new environment to complete tasks, whereas they could achieve a certain success rate without cooperation in the original environment.

We have chosen nine maps for extensive evaluation and made some minor changes to the observation part of agents to make it more difficult. Specifically, the sight range of agents is reduced from \(9\) to \(3\), and agents cannot perceive any information about their allies even if they are within the sight range. NDQ \citep{wang2019learning} adopts a similar change to increase the difficulty of action coordination. The rest of the settings remain the same as the default. In summary, \textit{we require the environment to be one where a high success rate cannot be achieved solely based on individual observations.}

We also evaluate the local communication version of SeqComm. Agents can only communicate with nearby agents (agents within their communication range). Note that the map size and the total number of agents restrict the number of nearby agents. As the task progresses, the number of nearby agents is from 2 to 4.

\textbf{Analysis.} The learning curves of SeqComm and the baselines in terms of the win rate are illustrated in Figure~\ref{fig:smac_curve}. All communication-based methods perform better than communicaion-free method (MAPPO). In easy scenarios, communication may not be very useful, but experiments have shown that in cases with significant partial observability and stochasticity, communication can greatly enhance agent ability.

We compare our method with TarMAC \citep{das2019tarmac}, which holds a similar position in communication settings to that of MAPPO in communication-free settings. SeqComm outperforms TarMAC in all maps, which verifies the gain of explicit action coordination. Moreover, the full-communication version performs better than the local-communication version because the former can access more information. However, it also costs more communication overhead. 



\subsection{Ablation Studies}
\label{sec:abla}

\textbf{Priority of Decision-Making.}
We primarily want to contribute a practical version to the community. Moreover, \textit{the fewer communicative agents there are, the fewer possible orders there are, thus increasing the probability of randomly obtaining a good order}. It would be more meaningful to demonstrate that devoting effort to finding a good order is still important in such a scenario. Therefore, we do the ablation study for the local version of the SeqComm. In more detail, we compare SeqComm with two ablation baselines: the priority of decision-making is determined randomly at each timestep, denoted as \textit{Random}, and agents only access the observations of others during training and execution, denoted as \textit{No action}. 

As depicted in Figure \ref{fig:ablation}, SeqComm achieves a higher win rate than \textit{Random} and \textit{No action} in all the maps. These results verify the importance of the priority of decision-making and the necessity to adjust it continuously during one episode. It is also demonstrated that SeqComm can provide a proper priority of decision-making. As discussed in Section~\ref{sec:theoretical}, although \textit{Random} also has the theoretical guarantee, they converge to poor local optima in practice. Surprisingly, in most tasks, \textit{Random} performs worse than \textit{No action}. It again verifies that a bad order may fail to improve coordination or even impair it.

\begin{figure*}[t!]
 	\centering
 	\begin{subfigure}{0.45\textwidth}
 		\centering
 		\setlength{\abovecaptionskip}{3pt}
 		\includegraphics[width=1.1\textwidth]{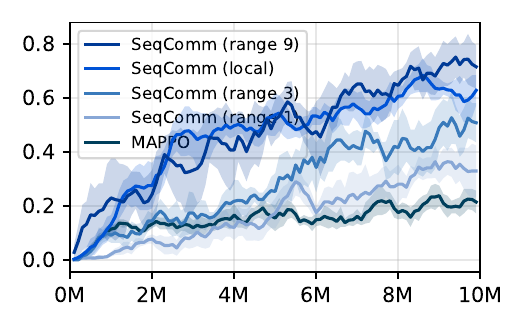}
 		\caption{protoss\_10\_vs\_10}
 	\end{subfigure}
         \hspace{1mm}
 	\begin{subfigure}{0.45\textwidth}
 		\centering
 		\setlength{\abovecaptionskip}{3pt}
 		\includegraphics[width=1.1\textwidth]{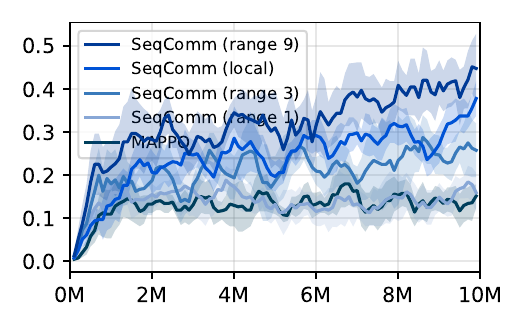}
 		\caption{terran\_10\_vs\_10}
 	\end{subfigure}
 	\caption{Ablation studies of the communication ranges.}
 	\label{fig:ablation_commrange}
\end{figure*}

\begin{wrapfigure}{t!}{0.40\textwidth}
    \vspace{-7mm}
 	\centering
    \includegraphics[width=0.40\textwidth]{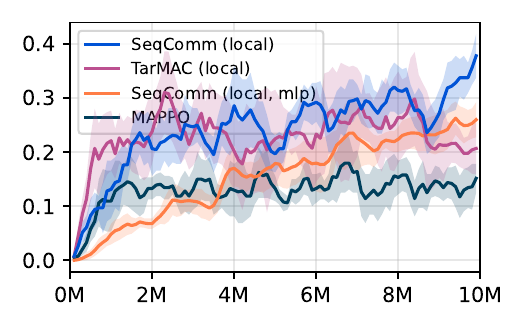}
 	\caption{Ablation studies on the network mechanisms.}
 	\label{fig:ablation_mlp}
    \vspace{-10mm}
\end{wrapfigure} 

\textbf{Communication Range.}
We also conduct experiments to demonstrate the impact of different communication ranges. We set communication ranges to \{1, 3, 9\}, in addition to the default range of 6. We notice a steady improvement in performance as the communication range increases. Therefore, the choice of communication range is a trade-off between communication overhead and agent performance. In our previous experiments, we choose a compromise value of 3 for the local version to validate the effectiveness of our method. Results refer to Figure~\ref{fig:ablation_commrange}.

\begin{figure*}[t!]
 	\centering
 	\begin{subfigure}{0.32\textwidth}
 		\centering
 		\setlength{\abovecaptionskip}{3pt}
 		\includegraphics[width=1.1\textwidth]{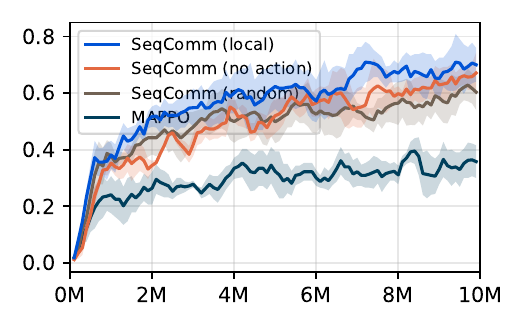}
 		\caption{protoss\_5\_vs\_5}
 	\end{subfigure}
         \hspace{1mm}
 	\begin{subfigure}{0.32\textwidth}
 		\centering
 		\setlength{\abovecaptionskip}{3pt}
 		\includegraphics[width=1.1\textwidth]{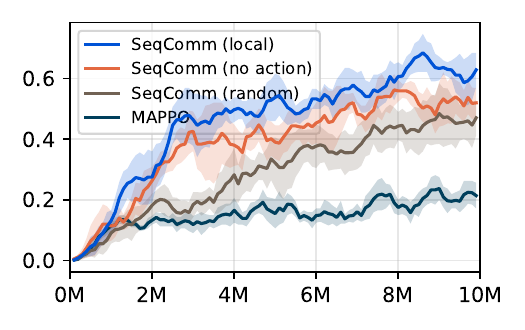}
 		\caption{protoss\_10\_vs\_10}
 	\end{subfigure}
         \hspace{1mm}
 	\begin{subfigure}{0.32\textwidth}
 		\centering
 		\setlength{\abovecaptionskip}{3pt}
 		\includegraphics[width=1.1\textwidth]{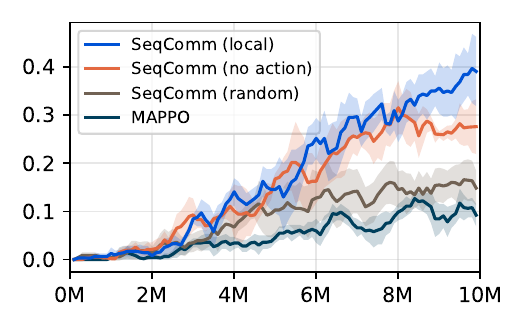}
 		\caption{protoss\_10\_vs\_11}
 	\end{subfigure}
         \hspace{1mm}
 	\begin{subfigure}{0.32\textwidth}
 		\centering
 		\setlength{\abovecaptionskip}{3pt}
 		\includegraphics[width=1.1\textwidth]{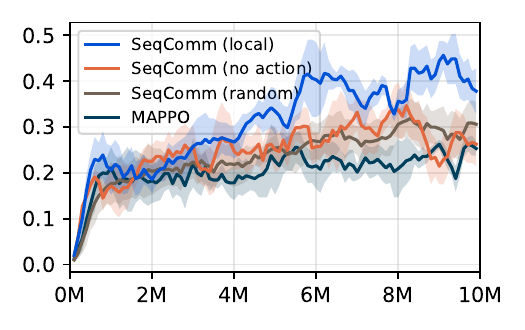}
 		\caption{terran\_5\_vs\_5}
 	\end{subfigure}
         \hspace{1mm}
         \begin{subfigure}{0.32\textwidth}
 		\centering
 		\setlength{\abovecaptionskip}{3pt}
 		\includegraphics[width=1.1\textwidth]{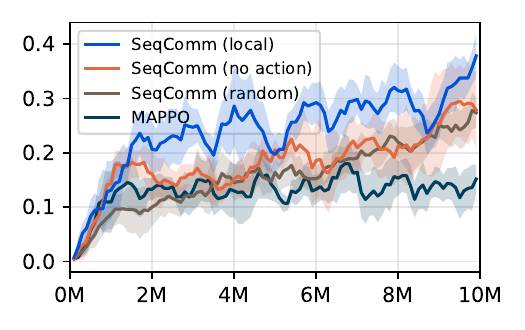}
 		\caption{terran\_10\_vs\_10}
 	\end{subfigure}
         \hspace{1mm}
        \begin{subfigure}{0.32\textwidth}
 		\centering
 		\setlength{\abovecaptionskip}{3pt}
 		\includegraphics[width=1.1\textwidth]{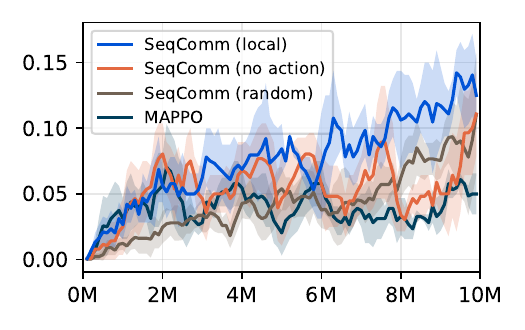}
 		\caption{terran\_10\_vs\_11}
 	\end{subfigure}
         \hspace{1mm}
 	\begin{subfigure}{0.32\textwidth}
 		\centering
 		\setlength{\abovecaptionskip}{3pt}
 		\includegraphics[width=1.1\textwidth]{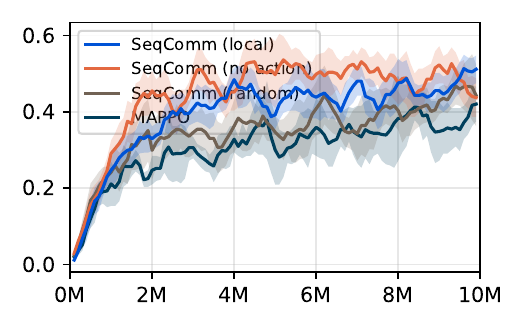}
 		\caption{zerg\_5\_vs\_5}
 	\end{subfigure}
         \hspace{1mm}
         \begin{subfigure}{0.32\textwidth}
 		\centering
 		\setlength{\abovecaptionskip}{3pt}
 		\includegraphics[width=1.1\textwidth]{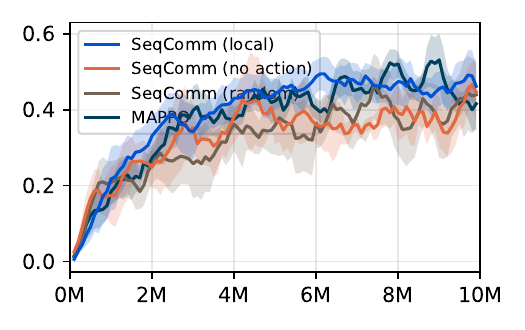}
 		\caption{zerg\_10\_vs\_10}
 	\end{subfigure}
         \hspace{1mm}
         \begin{subfigure}{0.32\textwidth}
 		\centering
 		\setlength{\abovecaptionskip}{3pt}
 		\includegraphics[width=1.1\textwidth]{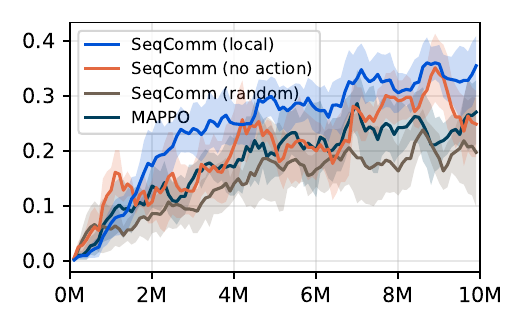}
 		\caption{zerg\_10\_vs\_11}
 	\end{subfigure}
 	\caption{Ablation studies under local communication in SMACv2.}
 	\label{fig:ablation}
\end{figure*}

\textbf{Network Mechanisum.}
We replaced the attention mechanism for local communication with an aggregation method. In more detail, messages are concatenated and passd into a five-layer linear neural networks. The curve is based on 3 random seeds and tested on the terran 10v10 map. The results refer to Figure~\ref{fig:ablation_mlp}.


\section{Conclusions}
We have proposed SeqComm, which enables agents to coordinate well and explicitly with each other, and it, from an asynchronous perspective, allows agents to make decisions sequentially. A two-phase communication scheme has been adopted to determine the priority of decision-making and transfer messages accordingly. Empirically, it is demonstrated that SeqComm outperforms baselines in a variety of cooperative multi-agent scenarios.

\textbf{Limitations.} The assumption of accessing the local observation of any other agent could be strong since it is unsuitable for all applications. Thus, we provide a local communication version of SeqComm for assumption relaxation in the experiment.

\section*{Acknowledgements}
This work was supported by the STI 2030-Major Projects under Grant 2021ZD0201404 and the NSFC under Grants 62450001 and 62476008.

\clearpage

\bibliographystyle{plainnat}
\bibliography{ref}

\clearpage

\appendix

\section{Proofs of Proposition \ref{monotonic_improvement} and Proposition \ref{order} } \label{app:proof}

\begin{lemma}[Agent-by-Agent PPO] \label{agent-by-agent}
	If we update the policy of each agent $i$ with TRPO \cite{schulman2015trust} (or approximately PPO) when fixing all the other agent's policies, then the joint policy will improve monotonically.
\end{lemma}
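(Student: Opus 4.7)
The plan is to reduce each agent-by-agent update to a single-agent TRPO update in an induced MDP and then chain the resulting monotonic improvements. Concretely, suppose agent $i$'s policy is about to be updated from $\pi_i$ to $\pi_i'$ while all other policies $\pi_{-i} = \{\pi_j\}_{j\neq i}$ are frozen. I would bundle $\pi_{-i}$ together with the environment dynamics $p(s'|s,\bm{a})$ into an induced single-agent MDP $\mathcal{M}_i$. In the sequential-decision protocol, the natural state for this MDP is the pair $(s,\bm{a}^{1:i-1})$, i.e.\ the environment state augmented by the prefix of within-timestep actions already committed by the upper-level agents; the action is $a^i$; and the transition is obtained by marginalizing over the frozen $\pi_{i+1:n}$, applying $p$, and then re-sampling the next timestep's action prefix from $\pi_{1:i-1}$. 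Markovianity is preserved because every frozen $\pi_j$ depends only on the state and the relevant action prefix, both of which are captured by the augmented state.

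The key observation is that, by construction, the discounted return that $\pi_i$ achieves in $\mathcal{M}_i$ is exactly the joint return $\eta[\pi_i,\pi_{-i}]$ of the original multi-agent system. Hence a TRPO update of $\pi_i$ inside $\mathcal{M}_i$, which produces a new $\pi_i'$ satisfying the trust-region constraint, inherits the single-agent monotonic improvement guarantee of \citet{schulman2015trust} and yields $\eta[\pi_i',\pi_{-i}] \ge \eta[\pi_i,\pi_{-i}]$. This is precisely the statement that replacing $\pi_i$ with $\pi_i'$ while holding the rest fixed does not decrease the joint return. Iterating this inequality over $i=1,\ldots,n$ in any fixed (or even time-varying) order produces a non-decreasing sequence of joint returns, which is the desired monotonic improvement of the joint policy.

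I expect the main technical obstacle to be the formal verification that $\mathcal{M}_i$ is a bona fide Markov decision process in the sequential-decision setting, since $\pi_i$ both conditions on the upper-level actions $\bm{a}^{1:i-1}$ and influences the sampling of the lower-level actions $\bm{a}^{i+1:n}$. One has to check that augmenting the state with the action prefix is enough to restore Markovianity, that the induced transition kernel and reward are stationary under the frozen $\pi_{-i}$, and that the TRPO surrogate and KL constraint lift cleanly from $\mathcal{M}_i$ to the joint system (so that a PPO-style clipped ratio, used by SeqComm in practice, still implements an approximate trust-region step in $\mathcal{M}_i$). Once this reduction is in place, the rest is a direct invocation of the single-agent TRPO monotonic improvement theorem together with a finite telescoping over the $n$ per-agent updates; boundedness of the discounted return by $r_{\max}/(1-\gamma)$ then upgrades monotonicity to convergence.
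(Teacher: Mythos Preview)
Your proposal is correct but follows a different route from the paper. The paper's proof is purely algebraic: it observes that when only agent $i$'s policy changes (${\pi}_{\operatorname{new}}^{-i} = {\pi}_{\operatorname{old}}^{-i}$), the joint importance ratio $\bm{\pi}_{\operatorname{new}}(\bm{a}|s)/\bm{\pi}_{\operatorname{old}}(\bm{a}|s)$ collapses to the individual one $\pi^i_{\operatorname{new}}(a^i|s)/\pi^i_{\operatorname{old}}(a^i|s)$, so the joint TRPO surrogate $L_{\bm{\pi}_{\operatorname{old}}}(\bm{\pi}_{\operatorname{new}})$ equals the individual surrogate $L_{\pi^i_{\operatorname{old}}}(\pi^i_{\operatorname{new}})$ taken against the marginalized advantage $A^i(s,a^i) = \mathbb{E}_{a^{-i}\sim\pi^{-i}_{\operatorname{old}}}[A(s,a^i,a^{-i})]$; the joint KL term likewise reduces to the individual KL. The single-agent TRPO lower bound then transfers verbatim. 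No auxiliary MDP is constructed at this stage; the lemma is stated and proved for independently-factored policies $\pi^i(a^i|s)$, and the sequential conditioning on action prefixes is handled only later, in the proof of Proposition~\ref{monotonic_improvement}, via a layered MDP $\tilde{M}$.

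Your approach instead builds the induced single-agent MDP $\mathcal{M}_i$ explicitly (and even folds in the action-prefix state augmentation that the paper postpones to Proposition~\ref{monotonic_improvement}) and then invokes the TRPO guarantee as a black box on $\mathcal{M}_i$. This is equally valid and arguably more conceptual, but, as you anticipate, it shifts the burden to verifying that $\mathcal{M}_i$ is a bona fide MDP with stationary dynamics and that the practical surrogate used in $\mathcal{M}_i$ coincides with the one the algorithm actually optimizes. The paper's algebraic identity sidesteps that verification entirely. The trade-off: the paper's argument is shorter and needs no auxiliary construction for the lemma itself, while your reduction is more modular, makes the connection to single-agent theory transparent, and already contains most of what the paper does separately in Proposition~\ref{monotonic_improvement}.
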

\begin{proof}
	
	We consider the joint surrogate objective in TRPO $L_{\bm{\pi}_{\operatorname{old}}}(\bm{\pi}_{\operatorname{new}})$ where $\bm{\pi}_{\operatorname{old}}$ is the joint policy before updating and $\bm{\pi}_{\operatorname{new}}$ is the joint policy after updating.
	
	Given that ${\pi}_{\operatorname{new}}^{-i} = {\pi}_{\operatorname{old}}^{-i}$, we have:
	\begin{align*}
		L_{\bm{\pi}_{\operatorname{old}}}(\bm{\pi}_{\operatorname{new}}) & = \mathbb{E}_{a \sim \bm{\pi}_{\operatorname{new}}} [A_{\bm{\pi}_{\operatorname{old}}}(s,\bm{a}) ] \\
		& = \mathbb{E}_{a \sim \bm{\pi}_{\operatorname{old}}} [ \frac{\bm{\pi}_{\operatorname{new}}(\bm{a}|s)}{\bm{\pi}_{\operatorname{old}}(\bm{a}|s)}A_{\bm{\pi}_{\operatorname{old}}}(s,\bm{a}) ] \\
		& = \mathbb{E}_{a \sim \bm{\pi}_{\operatorname{old}}} [ \frac{\pi^i_{\operatorname{new}}(a^i|s)}{\pi^i_{\operatorname{old}}(a^i|s)}A_{\bm{\pi}_{\operatorname{old}}}(s,\bm{a}) ] \\
		& = \mathbb{E}_{a^i \sim \pi^i_{\operatorname{old}}} \left[ \frac{\pi^i_{\operatorname{new}}(a^i|s)}{\pi^i_{\operatorname{old}}(a^i|s)}\mathbb{E}_{a^{-i} \sim \pi^{-i}_{old}}[A_{\bm{\pi}_{\operatorname{old}}}(s,a^i,a^{-i})] \right] \\
		& = \mathbb{E}_{a^i \sim \pi^i_{\operatorname{old}}} \left[ \frac{\pi^i_{\operatorname{new}}(a^i|s)}{\pi^i_{\operatorname{old}}(a^i|s)}A^i_{\bm{\pi}_{\operatorname{old}}}(s,a^i) \right] \\
		& = L_{\pi^i_{\operatorname{old}}}(\pi^i_{\operatorname{new}}),\\
	\end{align*}
	where $A^i_{\bm{\pi}_{\operatorname{old}}}(s,a^i) = \mathbb{E}_{a^{-i} \sim \pi^{-i}_{\operatorname{old}}}[A_{\bm{\pi}_{\operatorname{old}}}(s,a^i,a^{-i})]$ is the individual advantage of agent $i$, and the third equation is from the condition ${\pi}_{\operatorname{new}}^{-i} = {\pi}_{\operatorname{old}}^{-i}$. 
	
	With the result of TRPO, we have the following conclusion:
	\begin{align*}
		J({\pi}_{\operatorname{new}}) - J({\pi}_{\operatorname{old}}) & \ge L_{\bm{\pi}_{\operatorname{old}}}(\bm{\pi}_{\operatorname{new}}) - \operatorname{CD}_{\operatorname{KL}}^{\operatorname{max}}(\bm{\pi}_{\operatorname{new}}||\bm{\pi}_{\operatorname{old}}) \\
		& = L_{\pi^i_{\operatorname{old}}}(\pi^i_{\operatorname{new}}) - \operatorname{CD}_{\operatorname{KL}}^{\operatorname{max}}({\pi}^i_{\operatorname{new}}||{\pi}^i_{\operatorname{old}}) \quad (\text{from } {\pi}_{\operatorname{new}}^{-i} = {\pi}_{\operatorname{old}}^{-i})
	\end{align*}
	This means the individual objective is the same as the joint objective so the monotonic improvement is guaranteed.
\end{proof}

Then we can show the proof of Proposition \ref{monotonic_improvement}.
\begin{proof}
	We will build a new MDP $\tilde{M}$ based on the original MDP.
	We keep the action space $\tilde{A} = A = \times_{i = 1}^n A^i$, where $A^i$ is the original action space of agent $i$. The new state space contains multiple layers. We define $\tilde{S}^k = S\times (\times_{i = 1}^k A^i)$ for $k = 1,2,\cdots,n - 1$ and $\tilde{S}^0 = S$, where $S$ is the original state space. Then a new state $\tilde{s}^k \in \tilde{S}^k$ means that $\tilde{s}^k = (s,a^1,a^2,\cdots,a^k)$. The total new state space is defined as $\tilde{S} = \cup_{i = 0}^{n - 1}\tilde{S}^i$. Next we define the transition probability $\tilde{P}$ as following:
	\begin{align*}
		& \tilde{P}(\tilde{s}^\prime| \tilde{s}^k,a^{k + 1},a^{-(k + 1)}) = \mathbbm{1}\left(\tilde{s}^\prime = (\tilde{s}^k, a^{k + 1}) \right), \ k < n - 1 \\
		& \tilde{P}(\tilde{s}^\prime| \tilde{s}^k,a^{k + 1},a^{-(k + 1)}) = \mathbbm{1}\left( \tilde{s}^\prime \in \tilde{S}^0  \right) P(\tilde{s}^\prime|\tilde{s}^k,a^{k + 1} ), \ k = n - 1.
	\end{align*}
	This means that the state in the layer $k$ can only transition to the state in the layer $k + 1$ with the corresponding action, and the state in the layer $n - 1$ will transition to the layer 0 with the probability $P$ in the original MDP. The reward function $\tilde{r}$ is defined as following: 
	\begin{align*}
	    \tilde{r}(\tilde{s},\bm{a} ) = \mathbbm{1}\left( \tilde{s} \in \tilde{S}_0  \right) r(\tilde{s},\bm{a}).
	\end{align*}
	This means the reward is only obtained when the state in layer 0 and the value is the same as the original reward function. Now we obtain the total definition of the new MDP $\tilde{M} = \{\tilde{S},\tilde{A},\tilde{P},\tilde{r},\gamma\}$.
	
	Then we claim that if all agents learn in multi-agent sequential decision-making by PPO, they
	are actually taking agent-by-agent PPO in the new MDP $\tilde{M}$. To be precise, one update of multi-agent sequential decision-making in the original MDP $M$ equals to a round of update from agent 1 to agent $n$ by agent-by-agent PPO in the new MDP $\tilde{M}$. Moreover, the total reward of a round in the new MDP $\tilde{M}$ is the same as the reward in one timestep in the original MDP $M$. With this conclusion and Lemma \ref{agent-by-agent}, we complete the proof.
	
\end{proof}

The proof of Proposition \ref{order} can be seen as a corollary of the proof of Proposition \ref{monotonic_improvement}.

\begin{proof}
    From Lemma \ref{agent-by-agent} we know that the monotonic improvement of the joint policy in the new MDP $\tilde{M}$ is guaranteed for each update of one single agent's policy. So even if the different round of updates in the new MDP $\tilde{M}$ is with different order of the decision-making, the monotonic improvement of the joint policy is still guaranteed. Finally, from the proof of Proposition \ref{monotonic_improvement}, we know that the monotonic improvement in the new MDP $\tilde{M}$ equals to the monotonic improvement in the original MDP $M$. These complete the proof.
\end{proof}

\section{Proofs of Theorem \ref{model_error}} \label{model:proof}

\begin{lemma} [TVD of the joint distributions] \label{TVD_joint_distribution}
Suppose we have two distribution \(p_1(x,y)=p_1(x)p_1(x|y)\) and \(p_2(x,y)=p_2(x)p_2(x|y)\). We can bound the total variation distance of the joint as:
\begin{equation}
\nonumber
    D_{TV}(p_1(x,y)||p_2(x,y)) \leq D_{TV}(p_1(x)||p_2(x)) + \max_x D_{TV}(p_1(y|x)||p_2(y|x))
\end{equation}
    \begin{proof}
        See \citep{janner2019mbpo} (Lemma B.1). 
    \end{proof}

\end{lemma}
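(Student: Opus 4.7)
The plan is a short triangle-inequality argument built around an interpolating joint distribution. First I would unfold the definition of total variation, $D_{TV}(p_1(x,y)\|p_2(x,y)) = \tfrac{1}{2}\sum_{x,y}\lvert p_1(x,y) - p_2(x,y)\rvert$, and apply the chain rule to write each joint as $p_i(x,y) = p_i(x)\,p_i(y\mid x)$ for $i=1,2$ (interpreting the statement's $p_i(x\mid y)$ as the conditional of $y$ given $x$, which is the only factorization consistent with the right-hand side of the bound).

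Next I would introduce the hybrid distribution $p_1(x)\,p_2(y\mid x)$ inside the absolute value and split via the triangle inequality:
\begin{equation*}
\lvert p_1(x)p_1(y\mid x) - p_2(x)p_2(y\mid x)\rvert \;\le\; p_1(x)\,\lvert p_1(y\mid x) - p_2(y\mid x)\rvert \;+\; p_2(y\mid x)\,\lvert p_1(x) - p_2(x)\rvert.
\end{equation*}
Summing over $x,y$ and multiplying by $\tfrac{1}{2}$ separates the bound into two pieces. For the second piece, $\sum_y p_2(y\mid x) = 1$, so it collapses to $\tfrac{1}{2}\sum_x\lvert p_1(x)-p_2(x)\rvert = D_{TV}(p_1(x)\|p_2(x))$. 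For the first piece, I would pull the $x$-sum outside, identify $\tfrac{1}{2}\sum_y \lvert p_1(y\mid x) - p_2(y\mid x)\rvert$ as $D_{TV}(p_1(y\mid x)\|p_2(y\mid x))$, upper-bound it by $\max_x D_{TV}(p_1(y\mid x)\|p_2(y\mid x))$ uniformly in $x$, and then absorb the remaining $\sum_x p_1(x) = 1$. Adding the two bounds recovers the claimed inequality.

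There is no real obstacle here; the argument is entirely mechanical once the correct interpolating joint is chosen. The only design decision is which hybrid to insert ($p_1(x)p_2(y\mid x)$ versus $p_2(x)p_1(y\mid x)$), and either works symmetrically because the choice only swaps which marginal is paired with the conditional TV term, both of which are bounded uniformly by the right-hand side. Care should be taken that the $\tfrac{1}{2}$ normalization in the definition of $D_{TV}$ lines up on both sides after the split, but that is a bookkeeping check rather than a substantive step.
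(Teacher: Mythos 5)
Your argument is correct and is essentially the same as the one the paper defers to: the paper gives no proof of its own but cites Lemma B.1 of \citep{janner2019mbpo}, whose proof is exactly this insert-a-hybrid-joint, triangle-inequality computation (and you rightly read the statement's $p_i(x\mid y)$ as a typo for $p_i(y\mid x)$). Nothing further is needed.
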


\begin{lemma} [Markov chain TVD bound, time-varing] \label{markov_chain_TVD} Suppose the expected KL-divergence between two transition is bounded as $\max_t \mathbb{E}_{s \sim p_{1,t}(s)}D_{KL}(p_1(s^{\prime}|s)||p_2(s^{\prime}|s)) \leq \delta$, and the initial state distributions are the same $p_{1,t=0}(s)=p_{2,t=0}(s)$. Then the distance in the state marginal is bounded as:

\begin{equation}
\nonumber
    D_{TV}(p_{1,t}(s)||p_{2,t}(s)) \leq t\delta
\end{equation}

    \begin{proof}
        See \citep{janner2019mbpo} (Lemma B.2). 
    \end{proof}

\end{lemma}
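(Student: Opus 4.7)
The plan is to proceed by induction on $t$. The base case $t=0$ holds trivially because the two initial distributions coincide by hypothesis, giving $D_{TV}(p_{1,0},p_{2,0})=0 \le 0\cdot\delta$.

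For the inductive step, the key observation is that the next-state marginal can be written as $p_{i,t+1}(s')=\sum_{s} p_{i,t}(s)\,p_i(s'|s)$, i.e.\ $p_{i,t+1}$ is the marginal over $s'$ of the joint distribution $p_{i,t}(s,s')\triangleq p_{i,t}(s)\,p_i(s'|s)$. Since TVD satisfies the data processing inequality, marginalization does not increase TVD, so
\[
D_{TV}\bigl(p_{1,t+1},p_{2,t+1}\bigr) \;\le\; D_{TV}\bigl(p_{1,t}(s)\,p_1(s'|s),\; p_{2,t}(s)\,p_2(s'|s)\bigr).
\]
I would then invoke Lemma~\ref{TVD_joint_distribution} with $x=s$ and $y=s'$ (or a slight sharpening obtained from convexity of TVD that replaces the $\max_s$ on conditionals by an expectation under $p_{1,t}$) to split the right-hand side into the state-marginal TVD plus a conditional TVD term. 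The first summand is bounded by $t\delta$ by the inductive hypothesis, and the second is bounded by $\delta$ by the hypothesis of the lemma, giving $(t+1)\delta$ and closing the induction.

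One reconciliation is needed: the stated hypothesis controls an \emph{expected} $D_{KL}$ of the transitions while the conclusion is in $D_{TV}$. I would bridge this by applying Pinsker's inequality pointwise in $s$, $D_{TV}(p_1(\cdot|s)\|p_2(\cdot|s)) \le \sqrt{\tfrac{1}{2}D_{KL}(p_1(\cdot|s)\|p_2(\cdot|s))}$, followed by Jensen's inequality to pull the square root outside the expectation over $s\sim p_{1,t}$, which bounds the expected conditional TVD by $\sqrt{\delta/2}$. Under the natural reading that the hypothesis is in fact a TVD bound (matching Janner et al., 2019, Lemma B.2), the one-step contribution is exactly $\delta$ and the induction produces the stated $t\delta$ sharply. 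I expect the only step requiring genuine care to be the inductive-step decomposition, specifically verifying the convexity-based sharpening of Lemma~\ref{TVD_joint_distribution} so that the conditional contribution appears under the correct expectation; the data-processing step, the base case, and the final summation are all mechanical.
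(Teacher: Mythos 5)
Your argument is correct and is essentially the proof the paper relies on: the paper's own ``proof'' is only a pointer to Lemma B.2 of Janner et al.\ (2019), whose argument is exactly this induction with the one-step decomposition of the joint, and the expectation-sharpened form of Lemma~\ref{TVD_joint_distribution} you invoke (replacing $\max_s$ on the conditional term by $\mathbb{E}_{s\sim p_{1,t}}$, which follows from the triangle inequality applied to $\tfrac12\sum_{s,s'}|p_{1,t}(s)p_1(s'|s)-p_{2,t}(s)p_2(s'|s)|$) is precisely what is needed, since the hypothesis bounds an expectation rather than a supremum. You are also right to flag the KL-versus-TV mismatch: it is inherited verbatim from the cited lemma, and read literally the Pinsker--Jensen route only yields $t\sqrt{\delta/2}$, whereas the stated bound $t\delta$ holds exactly under the intended reading in which $\delta$ bounds the expected total variation of the transitions --- which is what the cited proof (and the way Lemma~\ref{markov_chain_TVD} is actually used in Lemma~\ref{Branched_Return_Bound}, where $\delta=\epsilon_m+\sum_k\epsilon_{\pi_k}$ is a TV quantity) assumes.
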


\begin{lemma}[Branched Returns Bound] \label{Branched_Return_Bound} Suppose the expected KL-divergence between two dynamics distributions is bounded as $\max_t \mathbb{E}_{s \sim p_{1,t}(s)} [D_{TV}(p_1(s^{\prime}|s,\boldsymbol{a})||p_2(s^{\prime}|s,\boldsymbol{a}))]$, and the policy divergences at level $k$ are bounded as $\max_{s,\boldsymbol{a}^{1:k-1}} D_{TV}(\pi_1(a^k|s,\bm{a}^{1:k-1})||\pi_2(a^k|s,\bm{a}^{1:k-1})) \le \epsilon_{\pi_k}$. Then the returns are bounded as:
\begin{equation}
\nonumber
    |\eta_1 - \eta_2| \leq \frac{2r_{\max}\gamma(\epsilon_{m}+\sum_{k=1}^n\epsilon_{\pi_k})}{(1-\gamma)^2} + \frac{2r_{\max}\sum_{k=1}^n\epsilon_{\pi_k}}{1-\gamma},
\end{equation}
where $r_{\max}$ is the upper bound of the reward function.

\begin{proof}
    Here, $\eta_1$ denotes the returns of $\boldsymbol{\pi}_1$ under dynamics $p_1(s^\prime|s,\boldsymbol{a})$, and  $\eta_2$ denotes the returns of $\boldsymbol{\pi}_2$ under dynamics $p_2(s^\prime|s,\boldsymbol{a})$. Then we have
    \begin{equation}
    \nonumber
    \begin{aligned}
    	|\eta_1-\eta_2| &= |\sum_{s,\boldsymbol{a}}(p_1(s,\boldsymbol{a})-p_2(s,\boldsymbol{a}))r(s,\boldsymbol{a})| \\
        &= |\sum_t \sum_{s,\boldsymbol{a}}\gamma^t (p_{1,t}(s,\boldsymbol{a})-p_{2,t}(s,\boldsymbol{a})) r(s,\boldsymbol{a})| \\
        &\leq  \sum_t \sum_{s,\boldsymbol{a}}\gamma^t |p_{1,t}(s,\boldsymbol{a})-p_{2,t}(s,\boldsymbol{a})|r(s,\boldsymbol{a})  \\
        &\leq  r_{\max} \sum_t \sum_{s,\boldsymbol{a}}\gamma^t |p_{1,t}(s,\boldsymbol{a})-p_{2,t}(s,\boldsymbol{a})|. 
        \end{aligned}
    \end{equation}
By Lemma \ref{TVD_joint_distribution}, we get
\begin{equation}
\nonumber
\begin{aligned}
\max_s D_{TV}(\pi_1(\boldsymbol{a}|s)||\pi_2(\boldsymbol{a}|s)) &\leq \max_{s,a_1} D_{TV}(\pi_1(\boldsymbol{a}^{-1}|s,a^1)||\pi_2(\boldsymbol{a}^{-1}|s,a^1)) \\
&+\max_s D_{TV}(\pi_1(a^1|s)||\pi_2(a^1|s)) 
\\
& \leq \cdots \\
& \leq \sum_{k=1}^n \max_{s,\boldsymbol{a}^{1:k-1}} D_{TV}(\pi_1(a^k|s,\bm{a}^{1:k-1})||\pi_2(a^k|s,\bm{a}^{1:k-1})) \\
& \leq \sum_{k=1}^n \epsilon_{\pi_k}.
\end{aligned}
\end{equation}


We then apply Lemma \ref{markov_chain_TVD}, using $\delta=\epsilon_{m}+\sum_{k=1}^n\epsilon_{\pi_k}$ (via Lemma  \ref{markov_chain_TVD} and \ref{TVD_joint_distribution}) to get
\begin{equation}
    \nonumber
\begin{aligned}
    D_{TV}(p_{1,t}(s)||p_{2,t}(s)) &\leq t \max_t E_{s \sim p_{1,t}(s)} D_{TV}(p_{1,t}(s^{\prime}|s)||p_{2,t}(s^{\prime}|s))  \\
                                    & \leq t \max_t E_{s \sim p_{1,t}(s)} D_{TV}(p_{1,t}(s^{\prime},\boldsymbol{a}|s)||p_{2,t}(s^{\prime},\boldsymbol{a}|s)) \\
                                    & \leq t (\max_t E_{s \sim p_{1,t}(s)} D_{TV}(p_{1,t}(s^{\prime}|s,\boldsymbol{a})||p_{2,t}(s^{\prime}|s,\boldsymbol{a})) \\
                                    &+ \max_t E_{s \sim p_{1,t}(s)} \max_s D_{TV}(\boldsymbol{\pi}_{1,t}(\boldsymbol{a}|s)||\boldsymbol{\pi}_{2,t}(\boldsymbol{a}|s)) ) \\
    & \leq t(\epsilon_{m}+\sum_{k=1}^n\epsilon_{\pi_k})
\end{aligned}
\end{equation}

And we also get $D_{TV}(p_{1,t}(s,\boldsymbol{a})||p_{2,t}(s,\boldsymbol{a})) \leq   t(\epsilon_{m}+\sum_{k=1}^n\epsilon_{\pi_k}) +\sum_{k=1}^n\epsilon_{\pi_k} $ by Lemma \ref{TVD_joint_distribution}.
Thus, by plugging this back, we get:
\begin{equation}
\nonumber
    \begin{aligned}
    |\eta_1 - \eta_2| &\leq  r_{\max} \sum_t \sum_{s,\boldsymbol{a}}\gamma^t |p_{1,t}(s,\boldsymbol{a})-p_{2,t}(s,\boldsymbol{a})| \\
    &\leq 2r_{\max} \sum_t \gamma^t (t(\epsilon_{m}+\sum_{k=1}^n\epsilon_{\pi_k}) +\sum_{k=1}^n\epsilon_{\pi_k} ) \\
    & \leq 2r_{\max} (\frac{\gamma(\epsilon_{m}+\sum_{k=1}^n\epsilon_{\pi_k}))}{(1-\gamma)^2}+\frac{\sum_{k=1}^n\epsilon_{\pi_k} }{1-\gamma})
    \end{aligned}
\end{equation}
\end{proof}

\end{lemma}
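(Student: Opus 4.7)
The plan is to bound $|\hat\eta[\bm{\pi}]-\eta[\bm{\pi}]|$ by inserting the data-collecting policy $\bm{\pi}_\beta$ as a \emph{double} pivot, and applying Lemma~\ref{Branched_Return_Bound} three times so that each invocation is exposed to either a pure policy shift or a pure dynamics shift, but never both simultaneously. Concretely, I would start from the identity
$$
\eta[\bm{\pi}] - \hat\eta[\bm{\pi}] \;=\; \bigl(\eta[\bm{\pi}] - \eta[\bm{\pi}_\beta]\bigr) \;+\; \bigl(\eta[\bm{\pi}_\beta] - \hat\eta[\bm{\pi}_\beta]\bigr) \;+\; \bigl(\hat\eta[\bm{\pi}_\beta] - \hat\eta[\bm{\pi}]\bigr),
$$
apply the triangle inequality, and bound the three pieces separately before collecting the constants.

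For the first piece, both returns are evaluated under the true dynamics $p$, so in Lemma~\ref{Branched_Return_Bound} the dynamics term drops out ($\epsilon_m = 0$) and only the policy divergences enter, giving $\frac{2\gamma r_{\max}\sum_k \epsilon_{\pi_k}}{(1-\gamma)^2}+\frac{2 r_{\max}\sum_k \epsilon_{\pi_k}}{1-\gamma}$. The third piece is symmetric under the model dynamics $\hat p$ and admits the same bound. For the middle piece both returns use the single policy $\bm{\pi}_\beta$, so all policy divergences vanish; what remains is the expected TV between $p$ and $\hat p$ along rollouts of $\bm{\pi}_\beta$, which is exactly the hypothesis of the theorem, contributing $\frac{2\gamma r_{\max}\epsilon_m}{(1-\gamma)^2}$. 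Summing the three gives precisely $C(\epsilon_m,\bm{\epsilon}_{\pi_{1:n}})$, and rearranging yields $\hat\eta[\bm{\pi}]\ge\eta[\bm{\pi}]-C(\epsilon_m,\bm{\epsilon}_{\pi_{1:n}})$.

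The step that I expect to require the most care is the middle piece. Lemma~\ref{Branched_Return_Bound} states the model-TV hypothesis under the state marginal generated by the first-argument policy-and-dynamics pair. Taking $(\bm{\pi}_1,p_1)=(\bm{\pi}_\beta,p)$ and $(\bm{\pi}_2,p_2)=(\bm{\pi}_\beta,\hat p)$ makes that marginal identical to the data-collecting distribution $\bm{\pi}_{\beta,t}$ appearing in the theorem's assumption, so no extra distribution-shift correction between $\bm{\pi}$ and $\bm{\pi}_\beta$ is incurred when charging the model error. Routing the model-error bound through $\bm{\pi}_\beta$ rather than through $\bm{\pi}$ is the whole point of the three-term split; I would flag this identification explicitly in the write-up, after which the remainder is just arithmetic --- adding the three bounds and collecting the $(1-\gamma)^{-2}$ and $(1-\gamma)^{-1}$ coefficients to recover the stated $C(\epsilon_m,\bm{\epsilon}_{\pi_{1:n}})$.
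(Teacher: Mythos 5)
Your proposal does not prove the statement at hand; it proves a different one. The statement is Lemma~\ref{Branched_Return_Bound} itself --- the bound on $|\eta_1-\eta_2|$ for two arbitrary policy/dynamics pairs $(\bm{\pi}_1,p_1)$ and $(\bm{\pi}_2,p_2)$ in terms of $\epsilon_m$ and $\sum_k\epsilon_{\pi_k}$ --- yet your argument opens by ``applying Lemma~\ref{Branched_Return_Bound} three times.'' What you have sketched is the derivation of Theorem~\ref{model_error} (the $\hat\eta$ versus $\eta$ gap, pivoting through $\bm{\pi}_\beta$), taking the very lemma you were asked to prove as a black box. As a proof of the lemma this is circular; as a proof of the theorem it is reasonable (and close to the paper's own, which uses a two-term rather than a three-term split), but that is not the assigned statement.

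What is actually required, and what the paper supplies, is the following chain: (i) write $|\eta_1-\eta_2|=|\sum_t\sum_{s,\bm{a}}\gamma^t(p_{1,t}(s,\bm{a})-p_{2,t}(s,\bm{a}))r(s,\bm{a})|\le r_{\max}\sum_t\gamma^t\sum_{s,\bm{a}}|p_{1,t}(s,\bm{a})-p_{2,t}(s,\bm{a})|$; (ii) telescope the total variation of the \emph{joint} policy across the $n$ sequential decision levels using Lemma~\ref{TVD_joint_distribution}, so that $\max_s D_{TV}(\bm{\pi}_1(\bm{a}|s)||\bm{\pi}_2(\bm{a}|s))\le\sum_{k=1}^n\epsilon_{\pi_k}$ --- this is the only step where the level-by-level factorization of the policy enters, and it is precisely the part of the lemma that goes beyond the single-agent result of \citet{janner2019mbpo}; (iii) feed the per-step divergence $\delta=\epsilon_m+\sum_{k=1}^n\epsilon_{\pi_k}$ into the Markov-chain bound of Lemma~\ref{markov_chain_TVD} to obtain $D_{TV}(p_{1,t}(s,\bm{a})||p_{2,t}(s,\bm{a}))\le t(\epsilon_m+\sum_{k=1}^n\epsilon_{\pi_k})+\sum_{k=1}^n\epsilon_{\pi_k}$; and (iv) evaluate $\sum_t\gamma^t\bigl(t\delta+\sum_{k=1}^n\epsilon_{\pi_k}\bigr)$ to produce the $(1-\gamma)^{-2}$ and $(1-\gamma)^{-1}$ terms. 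None of these steps appears in your proposal, so the proof of the lemma is entirely missing; you would need to carry out (i)--(iv) before your three-term pivot argument for the theorem has anything to rest on.
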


Then we can show the proof of Theorem \ref{model_error}.
\begin{proof}
    Let $\boldsymbol{\pi}_\beta$ denote the data collecting policy. We use Lemma \ref{Branched_Return_Bound} to bound the returns, but it will require bounded model error under the new policy $\boldsymbol{\pi}$. Thus, we need to introduce $\boldsymbol{\pi}_\beta$ by adding and subtracting $\eta[\boldsymbol{\pi}_\beta]$, to get:
    \begin{equation}
    \nonumber
        \hat{\eta}[\boldsymbol{\pi}] - \eta[\boldsymbol{\pi}] = \hat{\eta}[\boldsymbol{\pi}] - \eta[\boldsymbol{\pi}_\beta] + \eta[\boldsymbol{\pi}_\beta] -\eta[\boldsymbol{\pi}].
    \end{equation}
    
we can bound $L_1$ and $L_2$ both using Lemma \ref{Branched_Return_Bound} by using $\delta=\sum_{k=1}^n\epsilon_{\pi_k}$ and $\delta=\epsilon_m+\sum_{k=1}^n\epsilon_{\pi_k}$ respectively, and obtain:
    \begin{equation}
    \nonumber
        L_1 \ge -\frac{2\gamma r_{\max}\sum_{k=1}^n\epsilon_{\pi_k} }{(1-\gamma)^2} -\frac{2r_{\max}\sum_{k=1}^n\epsilon_{\pi_k}}{(1-\gamma)}
    \end{equation}
    
        \begin{equation}
    \nonumber
        L_2 \ge -\frac{2\gamma r_{\max}(\epsilon_{\pi_m}+\sum_{k=1}^n\epsilon_{\pi_k} )}{(1-\gamma)^2} -\frac{2r_{\max}\sum_{k=1}^n\epsilon_{\pi_k}}{(1-\gamma)}.
    \end{equation}

Adding these two bounds together yields the conclusion.
\end{proof}

\section{Additional Related Work}
\label{app:relwork}

\textbf{Reinforcement Learning in Stackelberg Game}\quad 
Many previous studies \citep{kononen2004asymmetric,sodomka2013coco,greenwald2003correlated,zhang2020bi} have investigated reinforcement learning in finding Stackelberg equilibrium. Bi-AC \citep{zhang2020bi} is a bi-level actor-critic method that allows agents to have different knowledge base so that Stackelberg equilibrium (SE) is possible to find. The actions can still be executed simultaneously and distributedly. It empirically studies the relationship between the cooperation level and the superiority of Stackelberg equilibrium to Nash equilibrium. AQL \citep{kononen2004asymmetric} updates the Q-value by solving the SE in each iteration and can be regarded as the value-based version of Bi-AC. 
 
Existing work mainly focuses on two-agent settings, and their order is fixed in advance. However, fixed order can hardly be an optimal solution, especially for large-scale homogeneous agent scenarios. To address this issue, we exploit agents' intentions to dynamically determine the priority of decision-making along the way of interacting with each other.

\vspace{0.1cm}
\textbf{Multi-Agent Path Finding (MAPF)}\quad
MAPF aims to plan collision-free paths for multiple agents on a given graph from their given start vertices to target vertices. In MAPF, prioritized planning is deeply coupled with collision avoidance \citep{van2005prioritized, ma2019searching}, where collision is used to design constraints or heuristics for planning.

We will distinguish MAPF from our work from three perspectives, \ie problem definition, the motivation behind agent ordering, and the incompatibility of the two methods.

Problem definition: MAPF aims to plan collision-free paths for multiple agents on a given graph from their given start vertices to their given target vertices. However, we aim to find a communication-based solution for any Markov decision process with interests aligned. MDP covers lots of possible coordination-needed scenarios, not just avoiding collisions. Besides, each agent has no specific given target.

Motivation: In MAPF, prioritized planning does not offer completeness or optimality guarantees. It is nevertheless popular because of its efficiency. In addition, the order is mainly used for avoiding collision. Unlike MAPF, our main contribution is to introduce prioritized decision-making to MARL and a method to determine the priority of decision-making. To the best of our knowledge, determining the priority of decision-making for learning algorithms has not been investigated. Moreover, combining with learning algorithms will make prioritized decision-making more general (solving MDPs), not just motion planning.

Methods: The different motivations and problems to solve will lead to the incompatibility of the algorithms in the two fields. For MAPF, the order is assigned arbitrarily or derived from the problem at hand. Collision is the keyword and prioritized planning is deeply coupled with this specific coordination problem so that better performance can be achieved. Taking the method \cite{ma2019searching} as an example, their two algorithms are conflict-driven search frameworks. That is, collision is used to design some constraints which are guided for search. In MARL, we have lots of unseen coordination problems and we cannot enumerate them all to design constraints.

\section{Implementation Details}
\label{app:imple}

\subsection{Algorithm}
In this part, we provide the pseudo-code of SeqComm as below:

\begin{breakablealgorithm}
\caption{Negotiation Phase}
\begin{algorithmic}
    \REQUIRE Number of agents $N$
    \STATE $\mathcal{P} = [\ ]$: already determined priority
    \STATE $\mathcal{A} = \{1, 2, ..., N\}$: remaining agents
    \STATE /* Agents communicate the hidden state $\boldsymbol{h}$ of their observations with each other*/ 
    \FOR{$i=1, 2, ..., N$}
    \FOR{$j$ {\bfseries in} $\mathcal{A}$}
    \STATE Compute agent $j$'s intention value $v_j$ via Algorithm 2
    \ENDFOR
    \STATE /* Agents in $\mathcal{A}$ communicate the intention values with each other*/ 
    \STATE Set $p_i$ to be the agent $j$ with the maximum $v_j$
    \STATE Append $p_i$ to $\mathcal{P}$ and remove it from $\mathcal{A}$
    \ENDFOR
\end{algorithmic}
\end{breakablealgorithm}

\begin{breakablealgorithm}
\caption{Intention Value Calculation of Agent $a$}
\begin{algorithmic}
    \REQUIRE Already determined priority $\mathcal{P}$, remaining agents $\mathcal{A}$, number of sampling trajectories $F$, length of predicted future trajectory $H$, policy $\pi$ and attention module ${\rm AM_a}$, world model $\mathcal{M}$ and attention module ${\rm AM_w}$, discount factor $\gamma$
    \FOR{$i = 1, 2, ..., F$}
    \STATE Randomly shuffle $\mathcal{A} \setminus \{a\}$ to sample a decision-making priority $\mathcal{P}_{\mathcal{A}\setminus \{a\}}$ of the remaining agents except agent $a$
    \FOR{$j = 0, 1, ..., H-1$}
    \STATE $\hat{\boldsymbol{a}}^{upper} = \{\}$: predicted actions from all upper-level agents
    \FOR{$k$ {\bfseries in} ${\rm Concat}(\mathcal{P}, [a], \mathcal{P}_{\mathcal{A}\setminus \{a\}})$}
    \STATE Sample $\hat{a}^k$ following $\pi(\cdot | {\rm AM_a}(\boldsymbol{h}_{t+j}, \boldsymbol{a}^{upper}))$
    \STATE Append $\hat{a}^k$ to $\hat{\boldsymbol{a}}^{upper}$
    \ENDFOR
    \STATE Rollout one step with the world model $\hat{\boldsymbol{o}}_{t+j+1}, \hat{r}_{t+j+1} = \mathcal{M}({\rm AM_w}(\boldsymbol{h}_{t+j},\boldsymbol{a}^{upper}))$
    \ENDFOR
    \STATE Compute the return of the trajectory $v_i = \sum_{t'=t+1}^{t+H} \gamma^{t'-t-1} \hat{r}_{t'}$ via the critic
    \ENDFOR
    \STATE Compute the average return $v = \frac{1}{F} \sum_{i=1}^F v_i$
\end{algorithmic}
\end{breakablealgorithm}

\begin{breakablealgorithm}
\caption{Launching Phase}
\begin{algorithmic}
    \REQUIRE Decision-making priority $\mathcal{P}$, policy $\pi$ and ${\rm AM_a}$
    \STATE $\boldsymbol{a}_t^{upper} = \{\}$: actions from all upper-level agents
    \FOR{$i$ {\bfseries in } $\mathcal{P}$}
    \STATE Sample $a_t^i$ following $\pi_i(\cdot | {\rm AM_a}(\boldsymbol{h}_t, \boldsymbol{a}_t^{upper}))$
    \STATE Append $a_t^i$ to $\boldsymbol{a}_t^{upper}$
    \STATE /* Send $\boldsymbol{a}_t^{upper}$ to the lower agent*/ 
    \ENDFOR
    \STATE Interact with the environment with $\boldsymbol{a}_t$ 
\end{algorithmic}
\end{breakablealgorithm}

We also provide the pseudo-code of the local communication version as below:

\begin{breakablealgorithm}
\caption{Local Negotiation Phase of Agent $a$}
\begin{algorithmic}
    \REQUIRE Neighbouring agents $\mathcal{N}$
    \STATE /* Agents communicate the hidden state $\boldsymbol{h}$ of their observations with neighbouring agents*/ 
    \STATE Compute local intention $v_a$ via Algorithm 5
    \STATE /* Send $v_a$ to neighbouring agents and receive $\{v_i\}_{i\in\mathcal{N}}$ from them */
    \STATE Set upper-level neighbouring agents $\mathcal{N}^{upper} = \{i\mid v_i > v_a, i\in \mathcal{N}\}$
    \STATE Set lower-level neighbouring agents $\mathcal{N}^{lower} = \{i\mid v_i < v_a, i\in \mathcal{N}\}$
\end{algorithmic}
\end{breakablealgorithm}

\begin{breakablealgorithm}
\caption{Local Intention Value Calculation of Agent $a$}
\begin{algorithmic}
    \REQUIRE Neighbouring agents $\mathcal{N}$, number of sampling trajectories $F$, length of predicted future trajectory $H$, policy $\pi$ and ${\rm AM_a}$, world model $\mathcal{M}$ and ${\rm AM_w}$, discount factor $\gamma$
    \FOR{$i = 1, 2, ..., F$}
    \STATE Randomly shuffle $\mathcal{N}$ to sample a local decision-making priority $\mathcal{P}_\mathcal{N}$
    \FOR{$j = 0, 1, ..., H-1$}
    \STATE $\hat{\boldsymbol{a}}^{upper} = \{\}$: predicted actions from all upper-level agents
    \FOR{$k$ {\bfseries in} ${\rm Concat}([a], \mathcal{P}_\mathcal{N})$}
    \STATE Sample $\hat{a}^k$ following $\pi(\cdot | {\rm AM_a}(\boldsymbol{h}_{t+j}, \boldsymbol{a}^{upper}))$
    \STATE Append $\hat{a}^k$ to $\hat{\boldsymbol{a}}^{upper}$
    \ENDFOR
    \STATE Rollout one step with the world model $\hat{\boldsymbol{o}}_{t+j+1}, \hat{r}_{t+j+1} = \mathcal{M}({\rm AM_w}(\boldsymbol{h}_{t+j},\boldsymbol{a}^{upper}))$
    \ENDFOR
    \STATE Compute the return of the trajectory $v_i = \sum_{t'=t+1}^{t+H} \gamma^{t'-t-1} \hat{r}_{t'}$
    \ENDFOR
    \STATE Compute the average return $v = \frac{1}{F} \sum_{i=1}^F v_i$ via the critic
\end{algorithmic}
\end{breakablealgorithm}

\begin{breakablealgorithm}
\caption{Local Launching Phase of Agent $a$}
\begin{algorithmic}
    \REQUIRE Upper-level neighbouring agents $\mathcal{N}^{upper}$, lower-level neighbouring agents $\mathcal{N}^{lower}$, policy $\pi$ and ${\rm AM_a}$
    \STATE /* Receive upper-level actions $\boldsymbol{a}_t^{upper}$ from all upper-level neighbouring agents $\mathcal{N}^{upper}$ /*
    \STATE Sample $a_t^i$ following $\pi_i(\cdot | {\rm AM_a}(\boldsymbol{h}_t, \boldsymbol{a}_t^{upper}))$
    \STATE /* Send $a_t^i$ to all lower-level neighbouring agents $\mathcal{N}^{lower}$ */
    \STATE Interact with the environment with $\boldsymbol{a}_t$ 
\end{algorithmic}
\end{breakablealgorithm}

\subsection{Architecture and Hyperparameters}
Our models, including SeqComm and its ablations, are implemented based on MAPPO. Two fully connected layers realize the critic and policy network. As for the attention module, key, query, and value have one fully connected layer each. The size of the hidden layers is 100. Tanh functions are used as nonlinearity. As there is no released code of TarMAC, we implement TarMAC by ourselves, following the instructions mentioned in the original papers \citep{das2019tarmac}.

For the world model, observations and actions are firstly encoded by a fully connected layer. The output size for the observation encoder is 48, and the output size for the action encoder is 16. Then, the outputs of the encoder will be passed into the attention module using the same structure aforementioned. Finally, we use a fully connected layer to decode. In these layers, Tanh is used as the nonlinearity. 

SeqComm and its ablation baseline share the same hyperparameters. For Protoss, the learning rate is \(1\mathrm{e}^{-5}\), while for Terran and Zerg, the learning rate is \(2.5\mathrm{e}^{-5}\). \(H\) and \(F\) for calculating intention value is set to 20 and 2. For TarMAC, the learning rate is tuned as \(5\mathrm{e}^{-5}\). TarMAC adopts MAPPO as the backbone and two-round communication mechanism. For MAPPO, we follow the default settings of the official code \citep{yu2021surprising}.

\subsection{Attention Module}
Attention module (AM) is applied to process messages in the world model, critic network, and policy network. AM consists of three components: query, key, and values. The output of AM is the weighted sum of values, where the weight of value is determined by the dot product of the query and the corresponding key. 

For AM in the world model denoted as \(\rm AM_{w}\), agent \(i\) gets messages \(\bm{m}^{-i}_t=\bm{h}_t^{-i}\) from all other agents at timestep \(t\) in negotiation phase, and predicts a query vector \(q^i_{t}\) following \({\rm AM}_{{\rm w},q}^i(h_t^i)\). The query is used to compute a dot product with keys \(\bm{k}_{t}=[k^1_t, \cdots, k^n_t]\). Note that \(k_t^j\) is obtained by the message from agent \(j\) following \({\rm AM}_{{\rm a},k}^i(
h_t^j)\) for \(j\neq i\), and \(k_t^i\) is from \({\rm AM}_{{\rm neg},k}^i(h_t^i)\). Besides, it is scaled by \(1/\sqrt{d_k}\) followed by a softmax to obtain attention weights \(\alpha\) for each value vector:
\begin{equation}
    \alpha_i = \operatorname{softmax} \left[
    \frac{{q^i_{t}}^T k^1_t}{\sqrt{d_k}} \cdots
    \underbrace{\frac{{q^i_{t}}^T k^j_t}{\sqrt{d_k}}}_{\alpha_{ij}} \cdots
    \frac{{q^i_{t}}^T k^n_t}{\sqrt{d_k}}
    \right]
\end{equation}
The output of attention module is defined as: \(c_t^i = \sum^n_{j=1} \alpha_{ij}v_t^j\), where \(v_t^j\) is obtained from messages or its own hidden state of observation following \({\rm AM}_{{\rm w},v}^i(\cdot)\).

As for AM in the policy and critic network denoted as \({\rm AM}_{\rm a}\) , agent \(i\) gets additional messages from upper-level agent in the launching phase. The message from upper-level and lower-level agent can be expanded as \(\bm{m}_t^{upper}=[\bm{h}_t^{upper}, \bm{a}_t^{upper} ]\) and \(\bm{m}_t^{lower}=[\bm{h}_t^{lower}, {0} ]\), respectively. In addition, the query depends on agent's own hidden state of observation \(h_t^i\), but keys and values are only from messages of other agents. 

\subsection{Training}

The training of SeqComm is an extension of MAPPO. The observation encoder $e$,
the critic $V$, and the policy \(\pi\) are respectively parameterized by \(\theta_e\), \(\theta_v\), \(\theta_{\pi}\). Besides, the attention module \({\rm AM_{a}}\) is parameterized by \(\theta_{a}\) and takes as input the agent's hidden state, the messages (hidden states of other agents) in the negotiation phase, and the messages (the actions of upper-level agents) in launching phase. Let \(\mathcal{D}=\{\tau_k\}_{k=1}^K\) be a set of trajectories by running policy in the environment. Note that we drop time \(t\) in the following notations for simplicity.

 The value function is fitted by regression on mean-squared error:
 \begin{equation}
 \label{eq:v}
 \mathcal{L}(\theta_v, \theta_a, \theta_e) = \frac{1}{KT}\sum_{\tau\in \mathcal{D} }\sum_{t=0}^{T-1} \Big \Vert V({\rm AM}_{\rm a}(e(\bm{o}), \bm{a}^{upper}))-\hat{R} \Big \Vert_2^2
 \end{equation}
 where \(\hat{R}\) is the discount rewards-to-go. 
 
 We update the policy by maximizing the PPO-Clip objective:
 \begin{equation}
 \label{eq:pi}
 \begin{split}
 \mathcal{L}(\theta_{\pi}, \theta_a, \theta_e)&=\frac{1}{KT}\sum_{\tau\in \mathcal{D} }\sum_{t=0}^{T-1} \min (\frac{\pi(a|{\rm AM}_{\rm a}(e(\bm{o}), \bm{a}^{upper}))}{\pi_{old}(a|{\rm AM}_{\rm a}(e(\bm{o}), \bm{a}^{upper}))}A_{\pi_{old}},g(\epsilon,A_{\pi_{old}}) )
 \end{split}
 \end{equation}
 where \( g(\epsilon, A)=\left\{
\begin{aligned}
(1+\epsilon) A  && A\geq 0  \\
(1-\epsilon)A   & &A \leq 0
\end{aligned}
\right.\), and \(A_{\pi_{old}}(\bm{o},\bm{a}^{upper},a )\) is computed using the GAE method. 
 
The world model \(\mathcal{M}\) is parameterized by \(\theta_w\) is trained as a regression model using the training data set \(\mathcal{S}\).
It is updated with the loss:
\begin{equation}
\label{eq:model}
\mathcal{L} ( \theta_w ) = \frac{1}{|\mathcal{S}|}\sum_{{\bm{o},\bm{a},\bm{o}',r} \in \mathcal{S}} \Big \Vert (\bm{o}',{r}) - \mathcal{M}({\rm AM}_{\rm w}(e(\bm{o}), \bm{a})) \Big \Vert_2^2.
\end{equation}
 
We trained our model on one GeForce GTX 1050 Ti and Intel(R) Core(TM) i9-9900K CPU @ 3.60GHz. 

\subsection{Addtional Ablation Studies}
We conduct a comparison of SeqComm against MAIC and CommFormer across six different maps: Protoss, Terran, and Zerg in 5v5 scenarios (first row) and Protoss, Terran, and Zerg in 10v10 scenarios (second row). The evaluation uses the official codebase for each method, with three random seeds per map under a full communication setting. The results refer to Figure~\ref{fig:add_baseline}.

\begin{figure*}[t!]
 	\centering
 	\begin{subfigure}{0.32\textwidth}
 		\centering
 		\setlength{\abovecaptionskip}{3pt}
 		\includegraphics[width=1.1\textwidth]{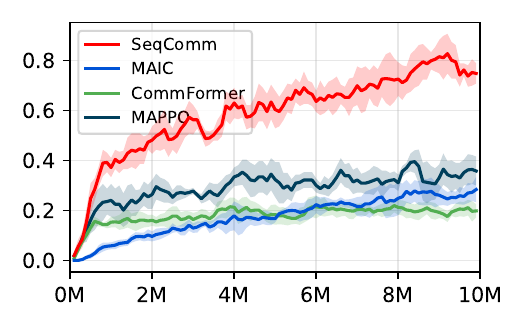}
 		\caption{protoss\_5\_vs\_5}
 	\end{subfigure}
         \hspace{1mm}
 	\begin{subfigure}{0.32\textwidth}
 		\centering
 		\setlength{\abovecaptionskip}{3pt}
 		\includegraphics[width=1.1\textwidth]{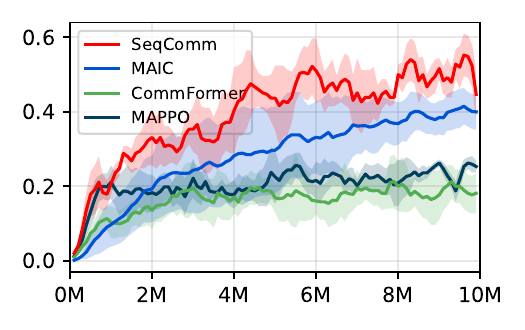}
 		\caption{terran\_5\_vs\_5}
 	\end{subfigure}
         \hspace{1mm}
 	\begin{subfigure}{0.32\textwidth}
 		\centering
 		\setlength{\abovecaptionskip}{3pt}
 		\includegraphics[width=1.1\textwidth]{SeqComm-NIPS/figure/protoss_5v5_add_baselines.pdf}
 		\caption{zerg\_5\_vs\_5}
 	\end{subfigure}
         \hspace{1mm}
 	\begin{subfigure}{0.32\textwidth}
 		\centering
 		\setlength{\abovecaptionskip}{3pt}
 		\includegraphics[width=1.1\textwidth]{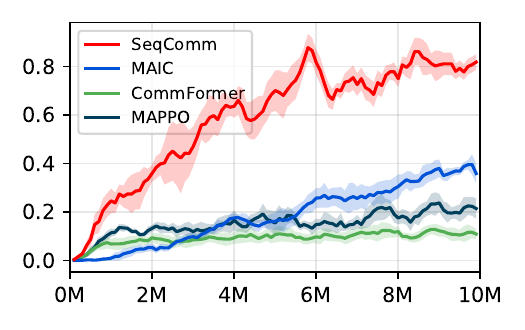}
 		\caption{protoss\_10\_vs\_10}
 	\end{subfigure}
         \hspace{1mm}
         \begin{subfigure}{0.32\textwidth}
 		\centering
 		\setlength{\abovecaptionskip}{3pt}
 		\includegraphics[width=1.1\textwidth]{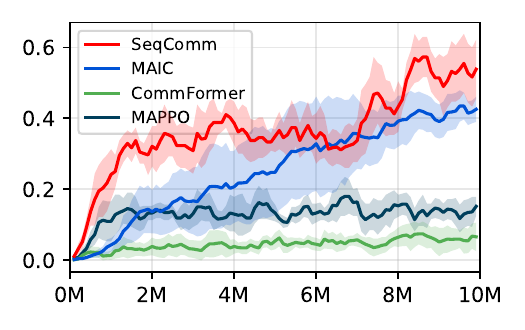}
 		\caption{terran\_10\_vs\_10}
 	\end{subfigure}
         \hspace{1mm}
        \begin{subfigure}{0.32\textwidth}
 		\centering
 		\setlength{\abovecaptionskip}{3pt}
 		\includegraphics[width=1.1\textwidth]{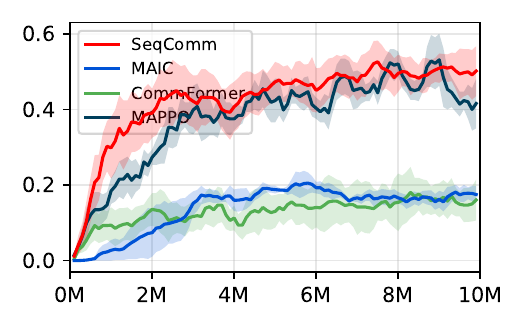}
 		\caption{zerg\_10\_vs\_10}
 	\end{subfigure}
 	\caption{Results of experiments with extra baseline algorithms.}
 	\label{fig:add_baseline}
\end{figure*}

\subsection{Emergence of Behavioral Patterns}
We have visualized several key frames in Figure~\ref{fig:eme} to highlight the observed behavioral patterns. In the combat game, concentrating attacks on a single enemy is consistently more effective than dispersing them. In frames 1-3, the agents lack specific targets until one agent, located at the end of the orange arrow, approaches an enemy in the bottom right corner. By frame 4, following the negotiation phase, this agent is designated as the highest-level agent (level 5), given its advantageous position to choose an enemy to attack. Once lower-level agents receive the actions from higher-level agents (represented by the white dashed line), all the red units cease random roaming and instead coordinate a unified attack on the blue units. A similar pattern can be observed in frames 7-9.

\begin{figure*}[!t]
  \centering
  \setlength{\abovecaptionskip}{3pt}
  \includegraphics[width=.80\textwidth]{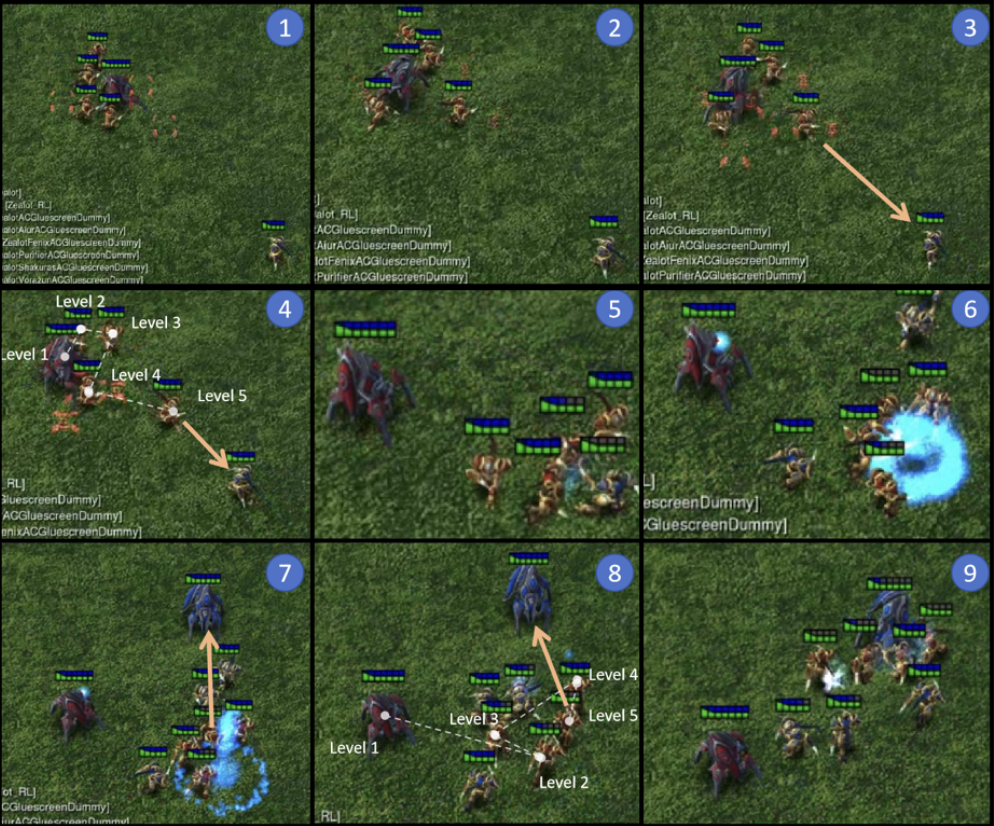}
  \caption{Illustration of the Emergence of Behavioral Patterns}
  \label{fig:eme}
\end{figure*}

\newpage

\section*{NeurIPS Paper Checklist}
\begin{enumerate}
\item {\bf Claims}
    \item[] Question: Do the main claims made in the abstract and introduction accurately reflect the paper's contributions and scope?
    \item[] Answer: \answerYes{} 
    \item[] Justification: In the Abstract, we outlined our contributions and reiterated our scope in the introduction.
    \item[] Guidelines:
    \begin{itemize}
        \item The answer NA means that the abstract and introduction do not include the claims made in the paper.
        \item The abstract and/or introduction should clearly state the claims made, including the contributions made in the paper and important assumptions and limitations. A No or NA answer to this question will not be perceived well by the reviewers. 
        \item The claims made should match theoretical and experimental results, and reflect how much the results can be expected to generalize to other settings. 
        \item It is fine to include aspirational goals as motivation as long as it is clear that these goals are not attained by the paper. 
    \end{itemize}

\item {\bf Limitations}
    \item[] Question: Does the paper discuss the limitations of the work performed by the authors?
    \item[] Answer: \answerYes{} 
    \item[] Justification: We discuss our limitations in the Conclusions.
    \item[] Guidelines:
    \begin{itemize}
        \item The answer NA means that the paper has no limitation while the answer No means that the paper has limitations, but those are not discussed in the paper. 
        \item The authors are encouraged to create a separate "Limitations" section in their paper.
        \item The paper should point out any strong assumptions and how robust the results are to violations of these assumptions (e.g., independence assumptions, noiseless settings, model well-specification, asymptotic approximations only holding locally). The authors should reflect on how these assumptions might be violated in practice and what the implications would be.
        \item The authors should reflect on the scope of the claims made, e.g., if the approach was only tested on a few datasets or with a few runs. In general, empirical results often depend on implicit assumptions, which should be articulated.
        \item The authors should reflect on the factors that influence the performance of the approach. For example, a facial recognition algorithm may perform poorly when image resolution is low or images are taken in low lighting. Or a speech-to-text system might not be used reliably to provide closed captions for online lectures because it fails to handle technical jargon.
        \item The authors should discuss the computational efficiency of the proposed algorithms and how they scale with dataset size.
        \item If applicable, the authors should discuss possible limitations of their approach to address problems of privacy and fairness.
        \item While the authors might fear that complete honesty about limitations might be used by reviewers as grounds for rejection, a worse outcome might be that reviewers discover limitations that aren't acknowledged in the paper. The authors should use their best judgment and recognize that individual actions in favor of transparency play an important role in developing norms that preserve the integrity of the community. Reviewers will be specifically instructed to not penalize honesty concerning limitations.
    \end{itemize}

\item {\bf Theory Assumptions and Proofs}
    \item[] Question: For each theoretical result, does the paper provide the full set of assumptions and a complete (and correct) proof?
    \item[] Answer: \answerYes{} 
    \item[] Justification: The Method or Appendix includes all assumptions and proofs.
    \item[] Guidelines:
    \begin{itemize}
        \item The answer NA means that the paper does not include theoretical results. 
        \item All the theorems, formulas, and proofs in the paper should be numbered and cross-referenced.
        \item All assumptions should be clearly stated or referenced in the statement of any theorems.
        \item The proofs can either appear in the main paper or the supplemental material, but if they appear in the supplemental material, the authors are encouraged to provide a short proof sketch to provide intuition. 
        \item Inversely, any informal proof provided in the core of the paper should be complemented by formal proofs provided in appendix or supplemental material.
        \item Theorems and Lemmas that the proof relies upon should be properly referenced. 
    \end{itemize}

    \item {\bf Experimental Result Reproducibility}
    \item[] Question: Does the paper fully disclose all the information needed to reproduce the main experimental results of the paper to the extent that it affects the main claims and/or conclusions of the paper (regardless of whether the code and data are provided or not)?
    \item[] Answer: \answerYes{} 
    \item[] Justification: We fully disclose all the information needed to reproduce the main experimental results.
    \item[] Guidelines:
    \begin{itemize}
        \item The answer NA means that the paper does not include experiments.
        \item If the paper includes experiments, a No answer to this question will not be perceived well by the reviewers: Making the paper reproducible is important, regardless of whether the code and data are provided or not.
        \item If the contribution is a dataset and/or model, the authors should describe the steps taken to make their results reproducible or verifiable. 
        \item Depending on the contribution, reproducibility can be accomplished in various ways. For example, if the contribution is a novel architecture, describing the architecture fully might suffice, or if the contribution is a specific model and empirical evaluation, it may be necessary to either make it possible for others to replicate the model with the same dataset, or provide access to the model. In general. releasing code and data is often one good way to accomplish this, but reproducibility can also be provided via detailed instructions for how to replicate the results, access to a hosted model (e.g., in the case of a large language model), releasing of a model checkpoint, or other means that are appropriate to the research performed.
        \item While NeurIPS does not require releasing code, the conference does require all submissions to provide some reasonable avenue for reproducibility, which may depend on the nature of the contribution. For example
        \begin{enumerate}
            \item If the contribution is primarily a new algorithm, the paper should make it clear how to reproduce that algorithm.
            \item If the contribution is primarily a new model architecture, the paper should describe the architecture clearly and fully.
            \item If the contribution is a new model (e.g., a large language model), then there should either be a way to access this model for reproducing the results or a way to reproduce the model (e.g., with an open-source dataset or instructions for how to construct the dataset).
            \item We recognize that reproducibility may be tricky in some cases, in which case authors are welcome to describe the particular way they provide for reproducibility. In the case of closed-source models, it may be that access to the model is limited in some way (e.g., to registered users), but it should be possible for other researchers to have some path to reproducing or verifying the results.
        \end{enumerate}
    \end{itemize}

\item {\bf Open access to data and code}
    \item[] Question: Does the paper provide open access to the data and code, with sufficient instructions to faithfully reproduce the main experimental results, as described in supplemental material?
    \item[] Answer: \answerNo{} 
    \item[] Justification: We plan to release data and code ASAP/upon acceptance.
    \item[] Guidelines:
    \begin{itemize}
        \item The answer NA means that paper does not include experiments requiring code.
        \item Please see the NeurIPS code and data submission guidelines (\url{https://nips.cc/public/guides/CodeSubmissionPolicy}) for more details.
        \item While we encourage the release of code and data, we understand that this might not be possible, so “No” is an acceptable answer. Papers cannot be rejected simply for not including code, unless this is central to the contribution (e.g., for a new open-source benchmark).
        \item The instructions should contain the exact command and environment needed to run to reproduce the results. See the NeurIPS code and data submission guidelines (\url{https://nips.cc/public/guides/CodeSubmissionPolicy}) for more details.
        \item The authors should provide instructions on data access and preparation, including how to access the raw data, preprocessed data, intermediate data, and generated data, etc.
        \item The authors should provide scripts to reproduce all experimental results for the new proposed method and baselines. If only a subset of experiments are reproducible, they should state which ones are omitted from the script and why.
        \item At submission time, to preserve anonymity, the authors should release anonymized versions (if applicable).
        \item Providing as much information as possible in supplemental material (appended to the paper) is recommended, but including URLs to data and code is permitted.
    \end{itemize}

\item {\bf Experimental Setting/Details}
    \item[] Question: Does the paper specify all the training and test details (e.g., data splits, hyperparameters, how they were chosen, type of optimizer, etc.) necessary to understand the results?
    \item[] Answer: \answerYes{} 
    \item[] Justification: We specify all the training and test details in the Experiments and Appendix.
    \item[] Guidelines:
    \begin{itemize}
        \item The answer NA means that the paper does not include experiments.
        \item The experimental setting should be presented in the core of the paper to a level of detail that is necessary to appreciate the results and make sense of them.
        \item The full details can be provided either with the code, in appendix, or as supplemental material.
    \end{itemize}

\item {\bf Experiment Statistical Significance}
    \item[] Question: Does the paper report error bars suitably and correctly defined or other appropriate information about the statistical significance of the experiments?
    \item[] Answer: \answerYes{} 
    \item[] Justification: We report error bars that are suitably and correctly defined, along with other appropriate information about the statistical significance of the experiments.
    \item[] Guidelines:
    \begin{itemize}
        \item The answer NA means that the paper does not include experiments.
        \item The authors should answer "Yes" if the results are accompanied by error bars, confidence intervals, or statistical significance tests, at least for the experiments that support the main claims of the paper.
        \item The factors of variability that the error bars are capturing should be clearly stated (for example, train/test split, initialization, random drawing of some parameter, or overall run with given experimental conditions).
        \item The method for calculating the error bars should be explained (closed form formula, call to a library function, bootstrap, etc.)
        \item The assumptions made should be given (e.g., Normally distributed errors).
        \item It should be clear whether the error bar is the standard deviation or the standard error of the mean.
        \item It is OK to report 1-sigma error bars, but one should state it. The authors should preferably report a 2-sigma error bar than state that they have a 96\% CI, if the hypothesis of Normality of errors is not verified.
        \item For asymmetric distributions, the authors should be careful not to show in tables or figures symmetric error bars that would yield results that are out of range (e.g. negative error rates).
        \item If error bars are reported in tables or plots, The authors should explain in the text how they were calculated and reference the corresponding figures or tables in the text.
    \end{itemize}

\item {\bf Experiments Compute Resources}
    \item[] Question: For each experiment, does the paper provide sufficient information on the computer resources (type of compute workers, memory, time of execution) needed to reproduce the experiments?
    \item[] Answer: \answerYes{}. 
    \item[] Justification: We provide sufficient information on the computer resources in the Appendix. 
    \item[] Guidelines:
    \begin{itemize}
        \item The answer NA means that the paper does not include experiments.
        \item The paper should indicate the type of compute workers CPU or GPU, internal cluster, or cloud provider, including relevant memory and storage.
        \item The paper should provide the amount of compute required for each of the individual experimental runs as well as estimate the total compute. 
        \item The paper should disclose whether the full research project required more compute than the experiments reported in the paper (e.g., preliminary or failed experiments that didn't make it into the paper). 
    \end{itemize}
    
\item {\bf Code Of Ethics}
    \item[] Question: Does the research conducted in the paper conform, in every respect, with the NeurIPS Code of Ethics \url{https://neurips.cc/public/EthicsGuidelines}?
    \item[] Answer: \answerYes{} 
    \item[] Justification: All the research conducted conforms with the NeurIPS Code of Ethics.
    \item[] Guidelines:
    \begin{itemize}
        \item The answer NA means that the authors have not reviewed the NeurIPS Code of Ethics.
        \item If the authors answer No, they should explain the special circumstances that require a deviation from the Code of Ethics.
        \item The authors should make sure to preserve anonymity (e.g., if there is a special consideration due to laws or regulations in their jurisdiction).
    \end{itemize}

\item {\bf Broader Impacts}
    \item[] Question: Does the paper discuss both potential positive societal impacts and negative societal impacts of the work performed?
    \item[] Answer: \answerNA{} 
    \item[] Justification: SeqComm is a MARL method that does not have potential societal impacts.
    \item[] Guidelines:
    \begin{itemize}
        \item The answer NA means that there is no societal impact of the work performed.
        \item If the authors answer NA or No, they should explain why their work has no societal impact or why the paper does not address societal impact.
        \item Examples of negative societal impacts include potential malicious or unintended uses (e.g., disinformation, generating fake profiles, surveillance), fairness considerations (e.g., deployment of technologies that could make decisions that unfairly impact specific groups), privacy considerations, and security considerations.
        \item The conference expects that many papers will be foundational research and not tied to particular applications, let alone deployments. However, if there is a direct path to any negative applications, the authors should point it out. For example, it is legitimate to point out that an improvement in the quality of generative models could be used to generate deepfakes for disinformation. On the other hand, it is not needed to point out that a generic algorithm for optimizing neural networks could enable people to train models that generate Deepfakes faster.
        \item The authors should consider possible harms that could arise when the technology is being used as intended and functioning correctly, harms that could arise when the technology is being used as intended but gives incorrect results, and harms following from (intentional or unintentional) misuse of the technology.
        \item If there are negative societal impacts, the authors could also discuss possible mitigation strategies (e.g., gated release of models, providing defenses in addition to attacks, mechanisms for monitoring misuse, mechanisms to monitor how a system learns from feedback over time, improving the efficiency and accessibility of ML).
    \end{itemize}
    
\item {\bf Safeguards}
    \item[] Question: Does the paper describe safeguards that have been put in place for responsible release of data or models that have a high risk for misuse (e.g., pretrained language models, image generators, or scraped datasets)?
    \item[] Answer: \answerNA{} 
    \item[] Justification: The paper poses no such risks. 
    \item[] Guidelines:
    \begin{itemize}
        \item The answer NA means that the paper poses no such risks.
        \item Released models that have a high risk for misuse or dual-use should be released with necessary safeguards to allow for controlled use of the model, for example by requiring that users adhere to usage guidelines or restrictions to access the model or implementing safety filters. 
        \item Datasets that have been scraped from the Internet could pose safety risks. The authors should describe how they avoided releasing unsafe images.
        \item We recognize that providing effective safeguards is challenging, and many papers do not require this, but we encourage authors to take this into account and make a best faith effort.
    \end{itemize}

\item {\bf Licenses for existing assets}
    \item[] Question: Are the creators or original owners of assets (e.g., code, data, models), used in the paper, properly credited and are the license and terms of use explicitly mentioned and properly respected?
    \item[] Answer: \answerYes{} 
    \item[] Justification: We properly credit all the assets.
    \item[] Guidelines:
    \begin{itemize}
        \item The answer NA means that the paper does not use existing assets.
        \item The authors should cite the original paper that produced the code package or dataset.
        \item The authors should state which version of the asset is used and, if possible, include a URL.
        \item The name of the license (e.g., CC-BY 4.0) should be included for each asset.
        \item For scraped data from a particular source (e.g., website), the copyright and terms of service of that source should be provided.
        \item If assets are released, the license, copyright information, and terms of use in the package should be provided. For popular datasets, \url{paperswithcode.com/datasets} has curated licenses for some datasets. Their licensing guide can help determine the license of a dataset.
        \item For existing datasets that are re-packaged, both the original license and the license of the derived asset (if it has changed) should be provided.
        \item If this information is not available online, the authors are encouraged to reach out to the asset's creators.
    \end{itemize}

\item {\bf New Assets}
    \item[] Question: Are new assets introduced in the paper well documented and is the documentation provided alongside the assets?
    \item[] Answer: \answerNA{} 
    \item[] Justification: The paper does not release new assets.
    \item[] Guidelines:
    \begin{itemize}
        \item The answer NA means that the paper does not release new assets.
        \item Researchers should communicate the details of the dataset/code/model as part of their submissions via structured templates. This includes details about training, license, limitations, etc. 
        \item The paper should discuss whether and how consent was obtained from people whose asset is used.
        \item At submission time, remember to anonymize your assets (if applicable). You can either create an anonymized URL or include an anonymized zip file.
    \end{itemize}

\item {\bf Crowdsourcing and Research with Human Subjects}
    \item[] Question: For crowdsourcing experiments and research with human subjects, does the paper include the full text of instructions given to participants and screenshots, if applicable, as well as details about compensation (if any)? 
    \item[] Answer: \answerNA{} 
    \item[] Justification: the paper does not involve crowdsourcing nor research with human subjects.
    \item[] Guidelines:
    \begin{itemize}
        \item The answer NA means that the paper does not involve crowdsourcing nor research with human subjects.
        \item Including this information in the supplemental material is fine, but if the main contribution of the paper involves human subjects, then as much detail as possible should be included in the main paper. 
        \item According to the NeurIPS Code of Ethics, workers involved in data collection, curation, or other labor should be paid at least the minimum wage in the country of the data collector. 
    \end{itemize}

\item {\bf Institutional Review Board (IRB) Approvals or Equivalent for Research with Human Subjects}
    \item[] Question: Does the paper describe potential risks incurred by study participants, whether such risks were disclosed to the subjects, and whether Institutional Review Board (IRB) approvals (or an equivalent approval/review based on the requirements of your country or institution) were obtained?
    \item[] Answer: \answerNA{} 
    \item[] Justification: The paper does not involve crowdsourcing nor research with human subjects.
    \item[] Guidelines:
    \begin{itemize}
        \item The answer NA means that the paper does not involve crowdsourcing nor research with human subjects.
        \item Depending on the country in which research is conducted, IRB approval (or equivalent) may be required for any human subjects research. If you obtained IRB approval, you should clearly state this in the paper. 
        \item We recognize that the procedures for this may vary significantly between institutions and locations, and we expect authors to adhere to the NeurIPS Code of Ethics and the guidelines for their institution. 
        \item For initial submissions, do not include any information that would break anonymity (if applicable), such as the institution conducting the review.
    \end{itemize}

\end{enumerate}

\end{document}